\documentclass[sigconf]{aamas} 
\setcopyright{none}
\renewcommand\footnotetextcopyrightpermission[1]{}

\usepackage{balance} 
\usepackage{caption}
\usepackage{subcaption} 
\usepackage[inline]{enumitem}[leftmargin=*,noitemsep,topsep=0pt] 
\usepackage{algorithm}
\usepackage{algpseudocode}

\title{Reconstructing Network Outbreaks under Group Surveillance}

\author{Ritwick Mishra}
\affiliation{
    \institution{Department of Computer Science \& Biocomplexity Institute,\\University of Virginia,}
    \city{Charlottesville, VA}
    \country{USA}}
\email{mbc7bu@virginia.edu}

\author{Abhijin Adiga}
\affiliation{
    \institution{Biocomplexity Institute,\\University of Virginia,}
    \city{Charlottesville, VA}
    \country{USA}}
\email{abhijin@virginia.edu}

\author{Anil Vullikanti}
\affiliation{
    \institution{Department of Computer Science \& Biocomplexity Institute,\\University of Virginia,}
    \city{Charlottesville, VA}
    \country{USA}}
\email{vsakumar@virginia.edu}

\begin{abstract}
A key public health problem during an outbreak is to reconstruct the disease cascade from a partial set of confirmed infections. This has been studied extensively under the Maximum Likelihood Estimation (MLE) formulation, which reduces the problem to finding some type of Steiner subgraph on a network. 
Group surveillance like wastewater or aerosol monitoring 
is a form of mass/pooled testing where samples from multiple individuals are pooled together and tested once for all. While a single negative test clears multiple individuals, a positive test does not reveal the infected individuals in the test pool. 
We introduce the \probPoolMLE{} problem in the setting of a network propagation process, where the goal is to find a MLE cascade subgraph which is consistent with the pooled test outcomes. 
Previous work on reconstruction assumes that the test results are of individuals, i.e., pools of size one, and requires a consistent cascade to connect the positive testing nodes. 
In \probPoolMLE{}, a consistent cascade must choose at least one node in each positive pool, adding another combinatorial layer. We show that, under the Independent Cascade (IC) model, \probPoolMLE{} is NP-hard, and present an approximation algorithm based on a reduction to the Group Steiner Tree problem. 
We also consider a one-hop version of this problem, in which the disease can spread for one time step after being seeded. We show that even this restricted version is NP-hard, and develop a method using linear programming relaxation and rounding. We evaluate the performance of our methods on real and synthetic contact networks, in terms of missing infection recovery and prevalence estimation. We find that our approach outperforms meaningful baselines which correspond to pools of size one and use state-of-the-art methods.

\end{abstract}

\keywords{Cascade Reconstruction; Combinatorial Optimization; Agent-Based Simulation; Social Network Analysis}

\newcommand{\probPoolMLE}{\textsc{PoolCascadeMLE}}
\newcommand{\probPoolMLEnoisy}{\textsc{NoisyPoolCascadeMLE}}
\newcommand{\probOneHop}{\textsc{One-HopCascadeMLE}}

\newcommand{\Cost}{\text{Cost}}
\newcommand{\CostOne}{\text{Cost}^1}

\newcommand{\algo}{\textsc{ApproxCascade}}
\newcommand{\algoOne}{\textsc{RoundCascade}}
\newcommand{\erdosrenyi}{Erd\H{o}s-R\'{e}nyi}

\setlist[itemize]{leftmargin=15pt,labelsep=5pt,noitemsep,topsep=5pt}
\setlist[enumerate]{leftmargin=13pt,labelsep=3pt,noitemsep,topsep=2pt}

\newtheorem{theorem}{Theorem} 

\newtheorem{lemma}{Lemma}

\newtheorem{assumption}{Assumption}

\newtheorem{observation}[theorem]{Observation}

\renewcommand{\hat}{\widehat}

\begin{document}


\pagestyle{fancy}
\fancyhead{}


\maketitle 


\section{Introduction}
Inferring the characteristics of an outbreak from partial observations 
is an important problem in many domains such as computational 
epidemiology, biological invasions, and social 
contagions~\cite{shaman2020estimation,jang:icdm21,robinson2017invasive,mishra2023reconstructing}. 
This is a very challenging problem due to multiple reasons.
First, the number of tests that is done is typically much smaller than the population.
Second, disease transmission is very complex, and there might be many possible
outbreak scenarios consistent with observed test results.
There has been a lot of work on formalizing how to reconstruct an outbreak from available information. 
A common and intuitive approach has been to find a maximum likelihood estimation (MLE) solution. 
However, solving for the MLE is quite challenging, and several works have approximated it with a minimum cost Steiner tree, which is well understood~\cite{rozenshtein:kdd16,jang:icdm21,zhu2014information,shah:itit11}.
While this is reasonable in many settings, in general, the MLE solution can be
quite different from a Steiner tree, and~\citet{mishra2023reconstructing} show that the MLE solution can, instead, be found directly.

Group surveillance is increasingly emerging as a practical way of monitoring diseases in populations.
A classical idea in this regard is pool testing~\cite{mutesa2021pooled, finster2023welfaremaximizing}.
where the objective is to test a set of samples (or ``pool'') simultaneously.
If the pool test result is negative (and the test is assumed to be perfect), all the individuals in the pool can be cleared of the infection.
On the other hand, if a pool test result is positive, one can only infer that at least one of the individuals in the pool was positive. The main motivation is that the availability of tests is often quite limited, especially in the early stages of a pandemic, as in the case of COVID-19. In recent times, various new environmental surveillance methods like
wastewater~\cite{kilaru2023wastewater,levy2023wastewater,amman2022viral,wolfe2024detection,galani2022sars} and bioaerosol~\cite{wang2024cost,bilo2025indoor} monitoring have emerged as important group surveillance methods.

In this work, we consider the problem of reconstructing a cascade in a group surveillance setting.
Despite a lot of interest in group surveillance, this problem hasn't been considered. Prior work on pool testing for infectious diseases~\cite{mutesa2021pooled, finster2023welfaremaximizing} has focused 
on the problem of allocating tests to maximize the number of entities who get cleared (by negative tests).
Wastewater monitoring studies have addressed problems such as determination of hospitalization rates, forecasting, risk assessment,
and optimal placement of sensors~\cite{wang2024cost,calle2021optimal}.
Air sampling studies consider the problem of timely detection of bioaerosols in hostpital settings
and livestock operations~\cite{anderson2017use,bilo2025indoor}.
Note that the prior work on cascade reconstruction can be viewed as a special case of this problem with pool size~1.

Our contributions are as follows:\\
1. We introduce the \probPoolMLE{} problem under the Independent Cascade (IC) model of disease
transmission on a network as constructing an MLE solution from given group test results (Section~\ref{sec:prelim}).
We show that \probPoolMLE{} is NP-hard to approximate within a factor of $O(\log^{2-\epsilon}{k})$, where $k$ denotes the number of node groups in the problem instance.
In contrast, when the group size is 1, an $O(\log{k})$ approximation is possible~\cite{mishra2023reconstructing}.
Here, we present a $O(k^\epsilon)$-approximation algorithm called \algo{}.
We also consider the more general setting where tests are unreliable, and show that a similar approximation is possible.\\
2. We consider a special case of the MLE estimation problem, \probOneHop{}, which corresponds
to for one step of disease spread.
This is specially motivated by group surveillance in wastewater and animal farms.
We show that this problem is also NP-hard, and give an $O(\log{k})$ approximation algorithm.\\
3. We evaluate the performance of \algo{} on two tasks: (a) recovering the infected nodes, (b) estimating prevalence, i.e., outbreak size. We use synthetic data generated on multiple synthetic and real contact networks, including
a contact network built from the Electronic Health Record (EHR) for the UVA Hospital ICU and a large realistic contact network representing the population of a small city. We find \algo{} compares favorably to baselines which reduce pool size to 1 and use the state-of-the-art method in \cite{mishra2023reconstructing}. We also evaluate \algoOne{} against a random baseline and find that it too outperforms in missing infection recovery. \\
4 . We examine the conditions in which the \probPoolMLE{} solution does not recover the ground truth to show the limitations of the MLE approach.    
We also show that even a little bit of noise in the test results can change the MLE solution of \probPoolMLEnoisy{} quite significantly, compared to that of  \probPoolMLE{}.

\section{Related work}
In the pool testing literature, the general goal is to identify the subset
of infections in a relatively very large population by simultaneously
testing multiple individuals. The idea was first proposed during World
War~II in the context of detecting syphilis infected
population~\cite{dorfman1943initialpools}. Subsequently, it has been
applied in various domains such as industrial testing, experiment design,
coding theory~\cite{du2000combinatorial}. Multiple variants of the problem
exist accounting for aspects such as adaptive or non-adaptive setting,
noiseless or noisy tests, and exact or partial
recovery~\cite{aldridge2019group}. There has been renewed interest in this
topic in the context of COVID-19~\cite{sunjaya2020pooled,mutesa2021pooled,finster2023welfaremaximizing}. In recent years,
there has been work that accounts for heterogeneous interactions of
individuals in the population in the design of group testing algorithms.
 References
\cite{sewell2022leveraging,arasli2023group} assume the knowledge of the
network of interactions for better pool design. To the best of our
knowledge, existing works focus on designing the pools with the objective
of minimizing the number of tests to be performed in both non-adaptive and
adaptive settings.  

Our work is motivated by the emergence of various environmental surveillance methods
like wastewater and bioaerosol monitoring in the context of infectious diseases
in humans and animals. 
The objective is to estimate important characteristics of the disease from 
observations for downstream tasks of risk assessment, situational awareness, and 
forecasting~\cite{chen2024wastewater}. In this setting, pools are already provided 
along with test results (see for example the national-scale monitoring of H5N1~\cite{louis2024wastewater,honein2024challenges}). 

In the setting of network propagation processes, reconstructing a cascade
given partial information about infected nodes (pool size~1) is
well-studied. \citet{rozenshtein:kdd16} use a directed Steiner tree
approach in a temporal network setting, where a subset of infected nodes
along with the time of infection is provided. \citet{jang:icdm21} account
for node attributes to detect asymptotic cases.
\citet{mishra2023reconstructing} formulate an MLE problem to reconstruct
cascade given a subset of observed infected nodes and diffusion model.
Unlike previous works, they consider true MLE cost that includes failed
infection attempts, and use a node-weighted Steiner tree approach for
cascade reconstruction. Our work uses this MLE formulation, but generalizes
this work to the pool-testing scenario. \citet{qiu2023reconstructing}
consider the problem of reconstructing diffusion history from a single
snapshot of the cascade, i.e., the knowledge of all infected nodes. Their
barycenter formulation does not assume the knowledge of the diffusion
parameters.

\section{Preliminaries}
\label{sec:prelim}


\textbf{Disease model. }
We consider the Independent Cascade (IC) model of disease spread which is the simplest form of the discrete-time SIR model, on an undirected network $G=(V,E)$. 
Each node is in one of the following states: susceptible (S), infectious (I) or removed (R). Let $V_0\subset V$ denote the initial set of nodes in state I that seed the disease at time $t=0$. At any time $t$, each node in state I infects its susceptible neighbor with probability according to the weight on the respective edge. Then, $v$ transitions to state R at time $t$ , thus getting only one chance to infect.
We consider two variations of this IC process:\\ 
(a) \texttt{single-seed}: The disease begins from a single known seed node, i.e. $V_0 = \{s\}$, and can spread over multiple time steps, until there are no more new infections. \\
(b) \texttt{one-hop}: The disease spread is limited to one time step after seeding. Here $V_0$ is not known.\\
One realization of this random process is called a \emph{cascade}. 



\noindent
\textbf{Pool testing.}  
A pool test  corresponds to a subset $g_i\subset V$.
This indicates that the subset $g_i$ are tested simultaneously; the test result is positive if any node in $g_i$ is infected.
A set $\Gamma=\{g_1 , g_2, \dots, g_k\}$, with each $g_i\subset V$ denotes a group level surveillance strategy.
We use $\Gamma_1$ to denote the set of pools which test positive (i.e., at least one node amont them is infected); $\Gamma_0=\Gamma - \Gamma_1$ is the set of pools which test negative.

\noindent
\textbf{Criteria for a consistent cascade. } Given a set of observations $(\Gamma_0, \Gamma_1)$, we require for a cascade $A$ to be consistent that it meet both of these criteria: (a) that it does not include any node from $\Gamma_0$, i.e, $\bigcup_{g\in\Gamma_0}g \cap V(A) = \emptyset$, and  (b) that it includes at least one node from each pool in $\Gamma_1$, i.e, $\forall g\in \Gamma_1, V(A) \cap g \neq \phi$. 

\noindent
\textbf{Example.}
Figure \ref{fig:pool_recon_ex} (left) shows a cascade (red edges) on a graph with 10 nodes.
There are two pool tests $\Gamma = \{g_1=\{5, 6, 7\}, g_2 = \{3, 8, 9\}\}$.
Suppose nodes 7 and 3 are infected.
Then, both the pools $g_1$ and $g_2$ will test positive, and $\Gamma_1=\{g_1, g_2\}$.
$T_r$ (red edges) denotes a disease cascade with root $r$.
The subgraph in purple edges in Figure \ref{fig:pool_recon_ex}(right) denotes a consistent cascade, since it ensures that both positive groups have a node connected to $r$.
The inferred cascade in the right is not the same as the one on the left, but shares some nodes.

\subsection{Problem formulation: MLE construction for \texttt{single-seed} instance}

\begin{figure}[tb]
    \centering
    \includegraphics[width=1.0\columnwidth]{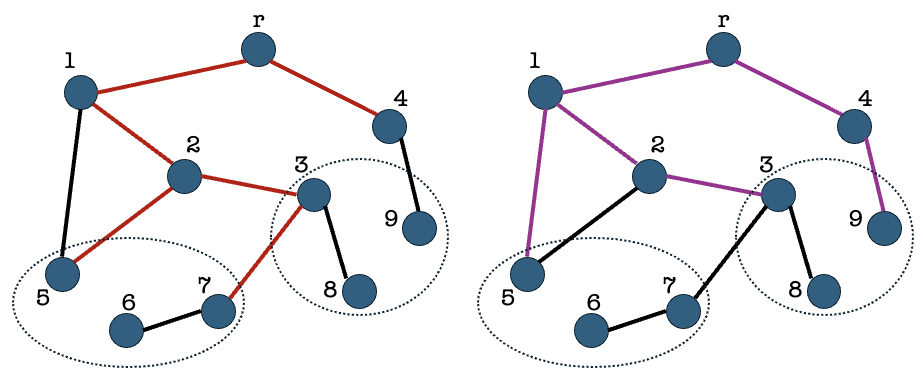}
    \caption{On the left, a cascade (in red) with root $r$ has resulted in two positive-testing pools (in dashed ovals). On the right, a reconstructed cascade $T_r$ is shown (in purple) which is \emph{consistent}, i.e., it contains at least one node from each positive pool. Here, $\delta_{T_r} = \{(3,7),(3,8)\},\lambda_{T_r}=\{(2,5)\}$. The probability $P(T_r)= p_{r1}p_{15}p_{12}p_{23}p_{r4}p_{49}(1-p_{37})(1-p_{38})$.}
    \label{fig:pool_recon_ex}
\end{figure}


We first define the probability of a cascade, and transform that to a cost, in order to formalize the MLE problem.
Let the set of all nodes in the negative pools be $S_0 = \bigcup_{g\in \Gamma_0} g$. Let $N_e(u)$ be the set of edges incident at node $u$.
Given a cascade $T_r$ in $G=(V,E)$,
we define:
(1) $E(T_r)$: the set of edges in $T_r$;
(2) $E(T_r,S_0)$: the set of all edges with one endpoint in $T_r$ and the other in $S_0$;
(3) $\delta_{T_r}$: the set of edges with one endpoint in $T_r$ and the other in $V\setminus (V(T_r)\cup S_0)$;
(4) $\lambda_{T_r}$: the set of edges with both endpoints in $T_r$, but not belonging to $T_r$.
Let $d_{T_r}(u,v)$ denote the distance in $T_r$ between
the nodes $u$ and $v$.
Under the IC disease model , the probability of the cascade $T_r$ is:
\begin{align*}
    \overline{P}(T_r) &= \prod_{e \in E(T_r)} p_e  
    \prod_{e\in E(T_r, S_0)} ( 1 - p_e)
    \prod_{e\in \delta_{T_r}} (1 - p_e) \\
      & \quad \prod_{\substack{e =(u,v)\in \lambda_{T_r}, \\ d_{T_r}(r,u) \neq d_{T_r}(r,v)}} (1 - p_e)
\end{align*}

\noindent
\textbf{Example.}
In the example in Figure \ref{fig:pool_recon_ex} (right), 
the set $E(T_r)$ for the cascade $T_r$ consists of the purple edges.


An MLE cascade is one which maximizes the above probability. 
Let $c_e = -\log{p_e}$ and $d_e = -\log{(1-p_e)}$.
As in prior formulations of MLE~\cite{mishra2023reconstructing}, we formalize the cost of a cascade as 
\begin{align}
    \Cost(T) = \sum_{e\in E(T)}c_e 
    + \sum_{e \in E(T,S_0)} d_e 
    + \sum_{e \in \delta_T} d_e
    + \sum_{e \in \lambda_T} d_e 
    \label{eq:approxCost}
\end{align}
As in~\cite{mishra2023reconstructing}, we disregard the fact that two neighbors in the cascade
equidistant from the source (in the cascade) do not contribute $d_e$ costs.




\noindent
\textbf{The \probPoolMLE{} problem. }Given an undirected graph $G=(V,E)$, a seed $r$, a set of pool-tested node groups $\Gamma=\Gamma_0\cup \Gamma_1$, 
find a subgraph $T_r$ rooted at $r$, which is consistent with $\Gamma$ and minimizes $\Cost(T_r)$.

We also consider a version where the tests are imperfect, i.e., the observations are noisy.

\noindent
\textbf{The \probPoolMLEnoisy{} problem. } Given an undirected graph $G=(V,E)$, a set of noisy pool-tested node groups $\Gamma=\Gamma_0\cup \Gamma_1$, find outcomes $\Gamma'_0\cup\Gamma'_1$, and a subgraph $T$, which is consistent with $\Gamma'_0\cup\Gamma'_1$, and minimizes $\Cost(T|\Gamma'_0\cup\Gamma'_1) + \log{(1/q(\Gamma_0, \Gamma_1, \Gamma'_0, \Gamma'_1))}$. Here $q(\Gamma_0, \Gamma_1, \Gamma'_0, \Gamma'_1)$ denotes the probability that the actual outcomes are $(\Gamma_0',\Gamma_1')$ when the observed outcomes are $(\Gamma_0,\Gamma_1)$.



\subsection{Problem formulation: MLE construction for the \texttt{one-hop} instance}

For simplicity, we assume the disease starts on nodes in $V_0$ and spreads to nodes in $V_1=V\setminus V_0$, and we are given observations from $V_1$.
For a cascade consisting of a subset $A$ of edges, let $V_0(A)$ denote the set of seeds in $V_0$.
Let $\sigma$ denote the set of edges between $V_0$ and $V_1$.
We can write the probability of cascade $A$ in terms of the probabilities of the constituent seeding and transmission events as follows.
\begin{align*}
    P(A)  = \prod_{v\in V_0(A)} p_v^{0} \prod_{v\in V\setminus V_0(A)} (1-p_v^0) \prod_{e\in E(A)}p_e \prod_{e\in \sigma_A \setminus E(A)}(1-p_e)
\end{align*}
Equivalently, we can write the objective in terms of costs. In addition to $c_e, d_e$, we define $a_v = -\log p_v^0$ as the cost of seeding $v$, and $b_v=-\log (1-p_v^0)$ as the cost of not seeding $v$. The cost of a cascade under the one-hop model is,
\begin{align}
    \CostOne(A) = \sum_{v\in V_0(A)} a_v + \sum_{v\in V\setminus V_0(A)}b_v + \sum_{e\in E(A)}c_e + \sum_{e\in \sigma_A\setminus E(A)}d_e
\end{align}
\textbf{The \probOneHop{} problem. } Given a set of observations $\Gamma= (\Gamma_0,\Gamma_1)$ , find a cascade $A$ which is consistent and minimizes $\CostOne{}(A)$.

\section{Hardness results}
\label{sec:hardness}

We show that both \probPoolMLE{} and \probOneHop{} are computationally hard, even to approximate.
In contrast, the MLE for the problem without pool testing (equivalently, all pools of size 1), which was considered in~\cite{mishra2023reconstructing}, can be approximated within an $O(\log{n})$ factor.

\begin{theorem}\label{thm:poolmlehardness}
\probPoolMLE{} is hard to approximate within a $O(\log^{2-\epsilon}{k})$ factor, for any $\epsilon>0$ unless P=NP. Here $|\Gamma_1|=k$.
\end{theorem}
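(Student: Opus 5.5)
We prove the statement by an approximation-preserving reduction from the Group Steiner Tree (GST) problem on trees. Halperin and Krauthgamer showed that GST, even when the input graph is a tree, cannot be approximated within $O(\log^{2-\epsilon} k)$ for any $\epsilon>0$ unless NP has quasi-polynomial time algorithms; here $k$ is the number of groups. The reduction I would invoke is the one matching the P$=$NP phrasing of the statement. An instance of GST consists of an edge-weighted tree $H=(V_H,E_H)$ with weights $w_e\ge 0$, a root $r$, and groups $g_1,\dots,g_k\subseteq V_H$. The task is to find a minimum-weight subtree containing $r$ and at least one node of every group. I will map this to an instance of \probPoolMLE{} with $G=H$, seed $r$, $\Gamma_1=\{g_1,\dots,g_k\}$ and $\Gamma_0=\emptyset$, so that $|\Gamma_1|=k$ is preserved.

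\textbf{Step 1: choose edge probabilities.} Set $p_e$ so that $c_e=-\log p_e = w_e$; after scaling, every $w_e>0$ gives a valid $p_e\in(0,1)$. The aim is to make the cost in \eqref{eq:approxCost} equal, up to a negligible error, to the GST weight of $E(T_r)$. Since $G$ is a tree and $T_r$ is a connected subgraph containing $r$, $T_r$ is a subtree. Hence $\lambda_{T_r}=\emptyset$, and $E(T_r,S_0)=\emptyset$ because $S_0=\emptyset$. The only spurious term is $\sum_{e\in\delta_{T_r}} d_e$, the cost of failed transmissions along the boundary edges of $T_r$.

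\textbf{Step 2: neutralise the boundary term.} This is the main obstacle, because $d_e=-\log(1-p_e)$ is tied to $c_e$. First I would scale all weights by a large factor $M$, so that $c_e = M w_e$. Then every $p_e = e^{-Mw_e}$ is tiny and $d_e\approx p_e$ is exponentially small. Any subtree $T$ then satisfies
\[ \Cost(T)\le M\,w(T) + n\max_e d_e, \]
and the additive error $n\max_e d_e$ can be made smaller than $1/\mathrm{poly}(n)$ times the minimum positive $M w_e$. This requires only polynomially many bits provided $w_e$ are polynomially bounded integers, which holds in the hard instances. Zero-weight edges can be contracted beforehand, which merges groups at the contracted node and does not affect hardness.

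\textbf{Step 3: transfer solutions.} Given a consistent $T_r$ with cost at most $\alpha\cdot\mathrm{OPT}$ for \probPoolMLE{}, the edge set $E(T_r)$ is a group Steiner tree. Its weight satisfies
\[ w(T_r)\le \Cost(T_r)/M\le \alpha(\mathrm{OPT}_{GST}+o(1)), \]
so an $O(\log^{2-\epsilon}k)$-approximation for \probPoolMLE{} yields one for GST on trees. Consistency coincides exactly with the group-cover requirement, since $\Gamma_0=\emptyset$. This contradicts the hardness of GST, which completes the reduction.

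If the boundary-term estimate of Step 2 turned out to be awkward, I would fall back on a subdivision gadget. Each tree edge would be replaced by a path whose internal nodes are placed in a negative pool in $\Gamma_0$, so that those boundary edges contribute fixed, accountable costs.
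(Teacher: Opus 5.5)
Your main argument is correct, and it is more explicit than the paper's proof. The paper's entire proof is the one sentence that the theorem ``follows from Lemma~\ref{lem:pool2gst} and the hardness of the group Steiner tree problem.''

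\textbf{How the two routes differ.} The paper's intended route is the factor-2 sandwich $\Cost(T)\le w(T)\le 2\,\Cost(T)$ between the cascade cost and the node-and-edge-weighted GST weight in the auxiliary graph $G'$. A constant-factor loss is harmless against a $\log^{2-\epsilon}k$ bound. However, Lemma~\ref{lem:pool2gst} is the algorithmic direction. The paper never explains why an arbitrary hard GST instance can be realized as such a $G'$, whose node weights $\sum_{e\ni u}d_e$ and edge weights $c_e-d_e$ are coupled through the same $p_e$.

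Your construction supplies exactly that missing encoding and bypasses Lemma~\ref{lem:pool2gst}:
\begin{itemize}
\item Taking the host graph to be the Halperin--Krauthgamer tree kills $\lambda_T$.
\item Taking $\Gamma_0=\emptyset$ kills the $E(T,S_0)$ term.
\item Scaling $c_e=Mw_e$ (after contracting zero-weight edges) makes the boundary term $\sum_{e\in\delta_T}d_e\le n\max_e d_e\le 2ne^{-M}$ negligible against $M\cdot\mathrm{OPT}_{GST}\ge M$. A logarithmic $M$ already suffices.
\end{itemize}
So your reduction loses only a $1+o(1)$ factor and preserves $k$. The paper's version loses a factor of 2 but leaves the instance encoding implicit. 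Choosing $p_e=2^{-Mw_e}$ instead of $e^{-Mw_e}$ keeps the instance rational with polynomially many bits.

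\textbf{First caveat: the complexity assumption.} Your sentence about invoking ``the reduction matching the P$=$NP phrasing'' has no substance. No $\log^{2-\epsilon}k$ hardness for GST is known under P$\neq$NP. What your reduction actually proves is hardness unless $\mathrm{NP}\subseteq\mathrm{ZTIME}(n^{\mathrm{polylog}\,n})$, and the same holds for the paper's citation of Halperin--Krauthgamer. Under P$\neq$NP alone, the GST (set-cover) route gives only $\Omega(\log k)$. This mismatch lies in the theorem statement itself. You should state your conclusion under the Halperin--Krauthgamer assumption rather than gesture at a stronger one.

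\textbf{Second caveat: the fallback gadget.} Drop it. Nodes in pools of $\Gamma_0$ are forbidden in any consistent cascade. Subdividing tree edges with negative-pool nodes would therefore disconnect $r$ from the groups, rather than contribute ``accountable'' boundary costs. Step~2 already works, so the fallback is not needed.
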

\noindent
The proof is in the supplement. It is a reduction from
the Group Steiner Tree problem.

\begin{theorem}\label{thm:onehophardness}
\probOneHop{} is NP-hard to approximate within a $O(\log{k})$ factor, unless P=NP.
\end{theorem}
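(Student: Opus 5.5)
The plan is an approximation-preserving reduction from Set Cover. Set Cover is NP-hard to approximate within $c\ln m$ for some constant $c>0$, where $m$ is the number of elements; for any fixed $c<1$ this follows from Dinur--Steurer. The goal is a reduction in which each element becomes a positive pool, so $k=m$.

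\textbf{Construction.} Take a Set Cover instance with universe $U=\{u_1,\dots,u_m\}$ and sets $S_1,\dots,S_n$.
\begin{itemize}
\item Let $V_0=\{s_1,\dots,s_n\}$, one potential seed per set.
\item For every pair $(u_i,S_j)$ with $u_i\in S_j$, create a fresh node $w_{ij}\in V_1$ and an edge $(s_j,w_{ij})$.
\item For each element $u_i$, the positive pool is $g_i=\{w_{ij}: u_i\in S_j\}$. Put $\Gamma_1=\{g_1,\dots,g_m\}$ and $\Gamma_0=\emptyset$.
\end{itemize}
Each $w_{ij}$ has a single neighbour $s_j$. So $w_{ij}$ can be infected only if $s_j$ is seeded and the edge transmits.

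\textbf{Parameters.} Choose the probabilities so that seeding costs $1$ and everything else is negligible. Set $p^0_v=1/2$ for all seeds, so $a_v=b_v=\log 2$. The $b_v$ terms then contribute $(n-|V_0(A)|)\log 2$, which varies with the solution. To remove that dependence, instead take $p^0_v$ small, giving $a_v=\log(1/p)$ and $b_v\approx p$. Set $p_e=1-\eta$ on every edge, so $c_e\approx\eta$ and $d_e=\log(1/\eta)$.

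The $d_e$ terms are a problem: every edge from a seed $s_j$ to its $w_{ij}$ that does not transmit costs $\log(1/\eta)$. Two fixes are possible. The first is to let the cascade take all edges out of a chosen seed, which costs only $|S_j|\eta$ in $c_e$ and no $d_e$. The second is to set $p_e$ close to $1$ so that including an edge is cheap and excluding it is expensive. The first fix suffices, because infecting extra pool members is harmless when $\Gamma_0=\emptyset$. Hence, without loss of generality, an optimal cascade seeds some set $J$ of indices and takes all of their edges.

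\textbf{Cost bound.} The resulting cost is
\[
\CostOne(A)=|J|\log(1/p)+(n-|J|)\,O(p)+\eta\sum_{j\in J}|S_j|.
\]
Choose $p=1/n^{3}$ and $\eta=1/(nm)^{2}$. Then $\CostOne(A)=|J|\log(1/p)+o(1)$.

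\textbf{Equivalence.} Consistency requires each pool $g_i$ to contain an infected node. That happens exactly when some $j$ with $u_i\in S_j$ lies in $J$, i.e.\ when $\{S_j\}_{j\in J}$ covers $U$. Conversely, any cascade can be normalised, without increasing its cost beyond the additive $o(1)$, to one that takes all edges of its seeds. So an $\alpha$-approximation for \probOneHop{} gives a cover of size at most $\alpha\,\mathrm{OPT}+o(1)$. Since $k=m$, an $o(\log k)$-approximation would contradict the Set Cover hardness.

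\textbf{Main obstacle.} The delicate point is the $b_v$ and $d_e$ bookkeeping. The non-seeding costs must not grow with the number of unused seeds, and non-transmitting edges from seeds must not dominate. Both are handled by the parameter choices above. One must also check that the normalisation step never raises the cost by more than a negligible additive term.
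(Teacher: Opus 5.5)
Your reduction is correct. It is the same Set Cover reduction the paper uses: sets become potential seeds in $V_0$, elements become positive observations, and the seeding cost $a$ is made to dominate. The gadget and the edge regime differ, though.

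The paper joins set-nodes directly to element-nodes and makes each element a singleton positive pool. It therefore gets hardness even with pool size one, so the difficulty comes from the unknown seed set rather than from pooling. Your incidence leaves $w_{ij}$ are harmless but unnecessary, since with your parameters the plain incidence graph behaves the same way.

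For the edges, the paper takes $c\gg d$ (small $p_e$). An optimal cascade then uses one live edge per element, and the leftover exclusion cost, at most $d|V_0||V_1|$, is swamped by making $a-b$ large. You instead take $p_e=1-\eta$, so every edge out of a seed is live and every non-seeding term is $o(1)$. This makes the ratio transfer fully explicit. The paper leaves implicit that the additive terms $c|V_1|+b|V_0|$ must also be small relative to $a$.

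The price is that your instances violate Assumption~\ref{prob_assume} ($p_e\le 1/2$). The $(2+2\ln k)$ guarantee of \algoOne{} depends on that assumption, because the 2-approximation lemma for the integer program uses $d_{ij}\le c_{ij}$. So the paper's regime shows the logarithmic bound is tight where the algorithm applies. Yours proves the theorem as stated but does not give that matching. To get both, keep $p_e\le 1/2$ and choose $p^0$ small enough that $bn+cm+d\,nm=o(a)$.
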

\noindent
The proof is in the supplement. It is a reduction from
the Minimum Set Cover problem.

\section{Our Approach}

Due to the significant hardness of the  \probPoolMLE{} and \probOneHop{} problems (Section~\ref{sec:hardness}), we focus on approximation algorithms here. We make the following natural assumption about the disease regime.

\begin{assumption} \label{prob_assume} The edge transmission probability $p_e \leq 1/2$ for all edges, i.e.,~$c(e) \geq d(e) $,  for all $e\in E$.
\end{assumption}

\subsection{\probPoolMLE{} problem}

\begin{lemma}
Under Assumption \ref{prob_assume}, a \probPoolMLE{} solution $T^\star_{r^*}$ is a tree.
\end{lemma}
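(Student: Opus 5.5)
We argue by exchange: if an optimal consistent subgraph $T^\star_{r^*}$ contained a cycle, deleting one cycle edge gives another feasible solution of no larger cost. Repeating this until no cycle remains yields an optimal solution that is a tree. The claim should be read in this sense: there is an optimal solution that is a tree, and any optimal solution with strictly positive edge costs can be pruned to one. We may also assume $T^\star_{r^*}$ is connected and contains $r^*$. A component not containing the root cannot arise from the cascade, so it is not a valid rooted cascade.

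\textbf{Feasibility after deletion.} Suppose $T^\star_{r^*}$ contains a cycle $C$, and pick any edge $e=(u,v)\in C$. Let $T'=T^\star_{r^*}-e$, keeping the vertex set unchanged. Then $T'$ is still connected and still contains $r^*$, because $u$ and $v$ remain joined by $C\setminus\{e\}$. Since $V(T')=V(T^\star_{r^*})$, both consistency conditions are unaffected. No node of $S_0$ is added, and every positive pool in $\Gamma_1$ still meets $V(T')$.

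\textbf{Cost comparison.} We go through the four terms of $\Cost$ in \eqref{eq:approxCost}.
\begin{itemize}
\item Because the vertex set is unchanged, $E(T',S_0)=E(T^\star_{r^*},S_0)$ and $\delta_{T'}=\delta_{T^\star_{r^*}}$, so these two sums are identical.
\item The edge $e$ leaves $E(T)$, which removes $c_e$ from the cost.
\item Since both endpoints of $e$ remain in $T'$, $e$ enters $\lambda_{T'}$, which adds $d_e$ to the cost. All other edges keep their membership in $\lambda$.
\end{itemize}
Hence
\[
\Cost(T')=\Cost(T^\star_{r^*})-c_e+d_e\le \Cost(T^\star_{r^*}),
\]
by Assumption~\ref{prob_assume}, which gives $c_e\ge d_e$. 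The inequality is strict whenever $p_e<1/2$.

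\textbf{Iteration and the one subtle point.} Iterating the deletion until no cycle remains produces a spanning tree of $V(T^\star_{r^*})$ that is consistent and no more expensive, hence still optimal. If every $p_e<1/2$, a cyclic solution could not have been optimal in the first place, so every optimal solution is a tree.

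The only place care is needed is the $\lambda$ term. The paper's cost \eqref{eq:approxCost} charges $d_e$ for every non-tree edge with both endpoints in $T$, deliberately ignoring the equidistance exception in $\overline{P}$. The argument uses exactly this definition, so the deleted edge costs precisely $d_e$ and no other term changes. Under the exact probability $\overline{P}$, by contrast, removing $e$ changes depths in the tree and could alter which $\lambda$-edges are charged.
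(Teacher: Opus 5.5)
Your proof is correct and uses the same exchange argument as the paper: delete an edge on a cycle, note that the vertex set (hence consistency) is unchanged, and conclude that the cost does not increase. Your accounting is more careful than the paper's one-line sketch: the deleted edge moves into $\lambda$, so the cost changes by $d_e - c_e \le 0$ rather than simply dropping. When $p_e = 1/2$ this decrease is only weak, so the statement should be read as ``some optimal solution is a tree,'' with every optimal solution a tree when all $p_e < 1/2$.
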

\noindent
If the solution has a cycle, then we can reduce its cost by removing an edge in the cycle and still have a feasible solution. Hence, the solution is a tree.

\noindent
Our approach is based on a reduction to the Group Steiner Tree (GST) problem on a graph with  node and edge weights. We are given an undirected graph $G$ and a set of pools $\Gamma$. We construct graph $G'$ from $G$ by adding node and edge weights as defined in 
Algorithm~\ref{alg:approx_cascade}, \textsc{ApproxCascade}. We remove the set of nodes belonging to the negative pools, $S_0$. Now our goal is to find a minimum weighted tree in $G'$ that connects to at least one node in each of the positive pools, $\Gamma_1$, which is the Group Steiner Tree problem \cite{ihler1991Bounds, charikar1999}.
Lemma \ref{lem:pool2gst} gives the bounds on the weight of such a tree in terms of our \Cost{} (Expression (\ref{eq:approxCost})). Let $N_e(u)$ be the set of all incident edges on node $u$.

\begin{lemma}
\label{lem:pool2gst}
    Let $T$ be any Group Steiner Tree in $G'$ with respect to the groups $\Gamma_1$. Let the weight of $T$ in $G'$ be denoted as $w(T)$. Then, $T$ is consistent with $\Gamma$ and 
    \begin{align*}
        \Cost(T) \leq w(T) \leq 2\ \Cost(T)
    \end{align*}
\end{lemma}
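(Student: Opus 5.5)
The plan is to fix the weights of $G'$ and then prove the inequality by counting, edge by edge, how often each $d_e$ is charged in $w(T)$ compared with $\Cost(T)$. Since Algorithm~\ref{alg:approx_cascade} is not reproduced here, I will assume it sets the weights as follows; these are the choices that make the bound come out, and the first step is to check them against the algorithm.
\begin{itemize}
\item Each edge $e$ of $G'$ gets weight $w(e)=c_e-d_e$. This is nonnegative by Assumption~\ref{prob_assume}.
\item Each node $u$ gets weight $w(u)=\sum_{e\in N_e(u)} d_e$, summed over all edges of the original $G$, including edges into $S_0$.
\item The nodes of $S_0$ are deleted.
\item The root $r$ is added as the singleton group $\{r\}$ (or the tree is forced to contain $r$).
\end{itemize}
The weight of a tree is $w(T)=\sum_{e\in E(T)}w(e)+\sum_{u\in V(T)}w(u)$.

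\textbf{Consistency.} This step is immediate.
\begin{itemize}
\item $T$ lives in $G'$, which contains no node of $S_0$, so condition (a) holds.
\item $T$ is a Group Steiner Tree for $\Gamma_1$, so it meets every positive pool, which is condition (b).
\item $T$ contains $r$, so it is a subgraph rooted at $r$.
\end{itemize}

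\textbf{Computing $w(T)$ exactly.} Classify every edge $e=(u,v)$ of $G$ that has at least one endpoint in $V(T)$ and count how many times $d_e$ appears in $\sum_{u\in V(T)} w(u)$.
\begin{itemize}
\item If exactly one endpoint is in $T$, then $e\in E(T,S_0)\cup\delta_T$, and $d_e$ is counted once.
\item If both endpoints are in $T$ but $e\notin E(T)$, then $e\in\lambda_T$, and $d_e$ is counted twice.
\item If $e\in E(T)$, then $d_e$ is counted twice by the node weights, and the edge weight adds $c_e-d_e$, for a net contribution of $c_e+d_e$.
\end{itemize}
Summing these contributions gives the identity
\begin{align*}
w(T)=\Cost(T)+\sum_{e\in E(T)} d_e+\sum_{e\in\lambda_T} d_e .
\end{align*}

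\textbf{The two bounds.} Since every $d_e\ge 0$, the identity gives $w(T)\ge \Cost(T)$. For the upper bound, Assumption~\ref{prob_assume} gives $d_e\le c_e$, so $\sum_{E(T)}d_e\le\sum_{E(T)}c_e$. Also, $\sum_{\lambda_T}d_e$ is itself a term of $\Cost(T)$. Hence the extra amount satisfies
\begin{align*}
\sum_{e\in E(T)} d_e+\sum_{e\in\lambda_T} d_e\le \sum_{e\in E(T)}c_e+\sum_{e\in\lambda_T}d_e\le\Cost(T),
\end{align*}
which yields $w(T)\le 2\,\Cost(T)$.

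\textbf{Main obstacle.} The only delicate point is the bookkeeping of edges incident to $S_0$. Their endpoints in $S_0$ have been deleted from $G'$, but their $d_e$ must still be charged to the surviving endpoint. This is why the node weight has to sum over $N_e(u)$ in the original $G$ and not in $G'$. If the algorithm instead uses edge weight $c_e$, the same counting argument gives only a factor of $3$. In that case I would check whether the algorithm subtracts $d_e$ on tree edges, since that subtraction is exactly what brings the constant down to $2$.
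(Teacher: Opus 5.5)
Your proof is correct and matches the paper's argument essentially step for step. The weights you assumed are exactly those set in Algorithm~\ref{alg:approx_cascade}, and the paper likewise expands $w(T)$ by edge type into $\Cost(T)+\sum_{e\in E(T)}d_e+\sum_{e\in\lambda_T}d_e$ and then uses $d_e\le c_e$ to obtain the factor $2$; you additionally spell out consistency and the lower bound $\Cost(T)\le w(T)$, which the paper leaves implicit.
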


\begin{proof}
    For any group Steiner tree $T$ on $G'$, we have
\begin{align*}
    & w(T) = \sum_{u\in V(T)} w(u) + \sum_{e \in E(T)} w(e) \\
    &= \sum_{u \in V(T)} \sum_{e \in N_e(u)} d_e + \sum_{e \in E(T)}(c_e - d_e) \\
    &= \sum_{u\in V(T)}\Big[\sum_{e \in N_e(u) \cap E(T) } d_e + \sum_{e \in N_e(u) \cap E(T,S_0)} d_e\\
    & \quad
    +\sum_{e \in N_e(u) \cap \delta_T} d_e  + \sum_{e \in N_e(u) \cap \lambda_T}d_e \Big] + \sum_{e \in E(T)}(c_e - d_e) \\
    &= 2\sum_{e \in E(T)}d_e + \sum_{e\in E(T,S_0)} d_e+ \sum_{e\in \delta_T} d_e + 2\sum_{e\in \lambda_T} d_e \\
    & \quad + \sum_{e \in E(T)}(c_e - d_e) \\
    &= \Cost(T) + \sum_{e \in E(T)} d_e + \sum_{e\in \lambda_T}d_e \\
    &\leq 2~\Cost(T) - \sum_{e \in E(T,S_0)}d_e - \sum_{e \in \delta_T} d_e \\
    &\Rightarrow w(T) \leq 2~\Cost(T)\,.
\end{align*}
\end{proof}

\begin{lemma}
    \label{lem:nwgst2ewdst}
    (Charikar~et~al.\cite{charikar1999}) There is an approximation preserving reduction from the Group Steiner Tree problem on a node- and edge-weighted graph to the  Directed Steiner Tree problem on a purely edge-weighted graph. 
\end{lemma}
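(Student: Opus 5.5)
The plan is to build the directed Steiner tree instance explicitly: split every node to absorb node weights, orient edges both ways, and add one sink per group so that "hit at least one node of each group" becomes "reach every terminal." We then check that solutions map back and forth with the same cost.

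\textbf{Construction.} Let the instance be an undirected graph $G'=(V',E')$ with node weights $w(u)\ge 0$, edge weights $w(e)\ge 0$, root $r$, and groups $g_1,\dots,g_k$. We build a directed graph $D$ as follows.
\begin{itemize}
\item For every $u\in V'$, create two nodes $u^{in}$ and $u^{out}$, joined by an arc $(u^{in},u^{out})$ of weight $w(u)$.
\item For every undirected edge $\{u,v\}\in E'$, add arcs $(u^{out},v^{in})$ and $(v^{out},u^{in})$, each of weight $w(\{u,v\})$.
\item For each group $g_i$, add a terminal $t_i$ and zero-weight arcs $(u^{out},t_i)$ for every $u\in g_i$.
\end{itemize}
The root of the directed instance is $r^{in}$, and the terminal set is $\{t_1,\dots,t_k\}$. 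The size of $D$ is polynomial in the size of $G'$.

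\textbf{Forward direction (tree to arborescence).} Take any group Steiner tree $T$ in $G'$; we may assume it contains $r$, since the root is fixed in our setting. Orient $T$ away from $r$ and map each vertex $u$ to the arc $(u^{in},u^{out})$ and each tree edge to its corresponding oriented arc. For each $i$, pick one vertex $u_i\in V(T)\cap g_i$ and add the arc $(u_i^{out},t_i)$. The result is an arborescence rooted at $r^{in}$ that reaches every terminal. Its weight is exactly $\sum_{u\in V(T)}w(u)+\sum_{e\in E(T)}w(e)=w(T)$.

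\textbf{Backward direction (arborescence to tree).} Take any arborescence $A$ in $D$ rooted at $r^{in}$ that reaches all the $t_i$.
\begin{enumerate}
\item Let $U$ be the set of $u$ with $u^{in}$ or $u^{out}$ in $A$. Every arc entering $u^{out}$ is the split arc, so whenever $u^{out}\in A$, the arc $(u^{in},u^{out})$ is in $A$ and its node weight is paid.
\item The undirected edges underlying the arcs of type $(u^{out},v^{in})$ in $A$ form a connected subgraph $H$ of $G'$ containing $r$. Every vertex of $H$ whose weight is counted had its split arc used.
\item Each $t_i$ is reached through some $u^{out}$ with $u\in g_i$, so $H$ touches every group.
\item Vertices that appear only as $u^{in}$ are leaves of $A$ entered by a non-split arc. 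Pruning them and their entering arcs only lowers the weight, and none of them is needed to reach a terminal.
\item Any spanning tree of $H$ then has $G'$-weight at most $w(A)$.
\end{enumerate}

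\textbf{Conclusion.} The two directions show that the optimal values of the two instances coincide. Moreover, any $\alpha$-approximate directed Steiner tree maps back in polynomial time to an $\alpha$-approximate group Steiner tree, which is the claimed approximation-preserving reduction.

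The step needing the most care is the backward direction: we must check that every terminal path passes through an $out$ copy, so that every group-hitting vertex really pays its node weight, and that pruning the dangling $in$-copies does not break connectivity.
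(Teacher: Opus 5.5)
Your proposal is correct and uses the same construction as the paper's Algorithm~\ref{alg:GST}: $in$/$out$ copies of each node joined by an arc carrying the node weight, both orientations of every edge, and one zero-weight dummy terminal per group, with the Directed Steiner Tree solved from the root. The paper only gives this construction and cites Charikar et al.\ for correctness, so your two-direction check fills in the verification it omits — rooting at $r^{in}$ so that $w(r)$ is paid, and pruning leaves that appear only as an $in$-copy, which relies on the weights being nonnegative (as Assumption~\ref{prob_assume} guarantees here).
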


\noindent
We describe the reduction in Lemma \ref{lem:nwgst2ewdst} in the supplement (Algorithm~\ref{alg:GST}). We convert the original graph into a purely edge-weighted graph by employing `in' and `out' copies of nodes in the usual manner. Then, we add a dummy node for each group, connecting each node in the group to this dummy node with zero-weighted edges. With the dummy nodes as terminals, we solve the Directed Steiner Tree problem~\cite{charikar1999}. Charikar~et~al.~\cite{charikar1999} provide the best-known approximation algorithm for the Directed Steiner Tree problem which has an approximation ratio of $O(k^\epsilon)$ and runs in time $O(kn^{1/\epsilon})$ for a fixed $\epsilon>0$, where $k$ is the number of terminals and $n$ is the size of the network. This leads to an approximation ratio of $O(k^\epsilon)$ for our algorithm, \textsc{ApproxCascade}. The proof is in the supplement. 

\begin{theorem}
\label{theorem:approx_bound}
    Let $\hat{T}_r$, rooted at $r$, be the tree returned by the Algorithm \ref{alg:approx_cascade}.
    Let $T^*_{r^*}$ be an optimal solution to \probPoolMLE{}, rooted at $r^*$. Then,
    \begin{align*}
        \Cost(\hat{T}_r) \leq O(k^\epsilon)\Cost(T^*_{r^*}),
    \end{align*}
    where $k$ is the number of positive pools, i.e., $|\Gamma_1|$ and a fixed $\epsilon>0$.
\end{theorem}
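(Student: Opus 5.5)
The plan is to sandwich $\Cost$ between the group Steiner tree weight $w$ in $G'$ using Lemma~\ref{lem:pool2gst}, and then move the approximation guarantee of Charikar et al.\ across via Lemma~\ref{lem:nwgst2ewdst}.

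\textbf{Step 1: the optimum is feasible for GST.} By the earlier lemma, under Assumption~\ref{prob_assume} the optimal solution $T^*_{r^*}$ is a tree. Consistency means it avoids $S_0$, so it lives in $G'$ with $S_0$ removed. Consistency also means it hits every pool in $\Gamma_1$. Hence $T^*_{r^*}$ is a feasible Group Steiner Tree in $G'$ for the groups $\Gamma_1$. By Lemma~\ref{lem:pool2gst},
\[
\mathrm{OPT}_{GST} \le w(T^*_{r^*}) \le 2\,\Cost(T^*_{r^*}).
\]

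\textbf{Step 2: the returned tree is near-optimal for GST.} Lemma~\ref{lem:nwgst2ewdst} converts the node- and edge-weighted GST instance into an edge-weighted Directed Steiner Tree instance with $k=|\Gamma_1|$ dummy terminals. This conversion is approximation preserving: a directed tree of weight $W$ maps back to a group Steiner tree of weight at most $W$, and any group Steiner tree yields a directed tree of equal weight. Charikar et al.'s algorithm then gives a tree $\hat{T}_r$ with
\[
w(\hat{T}_r) \le O(k^\epsilon)\,\mathrm{OPT}_{GST}.
\]

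\textbf{Step 3: chain the bounds.} Lemma~\ref{lem:pool2gst} also says $\hat{T}_r$ is consistent and $\Cost(\hat{T}_r) \le w(\hat{T}_r)$. Combining with Steps 1 and 2,
\[
\Cost(\hat{T}_r) \le w(\hat{T}_r) \le O(k^\epsilon)\cdot 2\,\Cost(T^*_{r^*}),
\]
and the constant $2$ is absorbed into the $O(\cdot)$.

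\textbf{Main obstacle: the root.} The directed reduction needs a root, but $r^*$ is the optimum's root and may differ from the $r$ used by the algorithm. In the single-seed setting $r$ is given, so $r=r^*$ and the issue disappears. If the seed is unknown, I would run the reduction from every candidate root (or from one node of a chosen group) and keep the cheapest output. For the candidate root $r^*$ the bound above holds, so the minimum over all runs also satisfies it. This costs only a factor $n$ in running time.

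A secondary check: the node weights $w(u)=\sum_{e\in N_e(u)} d_e$ should count only edges of the original $G$, including those into $S_0$, so that $E(T,S_0)$ is charged as Lemma~\ref{lem:pool2gst} assumes. The edge weights $c_e-d_e\ge 0$ by Assumption~\ref{prob_assume}, which is needed for the Directed Steiner Tree algorithm to apply.
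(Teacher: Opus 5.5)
Your proposal is correct and follows the paper's proof essentially step for step. You use Lemma~\ref{lem:pool2gst} for both $\Cost(\hat{T}_r) \le w(\hat{T}_r)$ and $w(T^*_{r^*}) \le 2\,\Cost(T^*_{r^*})$, bound the group Steiner optimum by $w(T^*_{r^*})$ via feasibility of $T^*_{r^*}$ in $G'$, apply the $O(k^\epsilon)$ guarantee through Lemma~\ref{lem:nwgst2ewdst}, and absorb the factor $2$. Your treatment of the root also matches the paper, which runs the routine from every candidate root and keeps the cheapest tree so that the run at $r^*$ certifies the bound; your added remark that $r=r^*$ when the seed is given, as in the problem definition, is a correct simplification.
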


    



\noindent
\textbf{Time complexity. }
 Algorithm \ref{alg:approx_cascade} runs in $O(n^2 + kn^{1/\epsilon})$time. At a fixed $\epsilon = 0.5$, the time complexity is $O(kn^2)$.


\begin{algorithm}[ht]
\caption{\textsc{ApproxCascade}}
\label{alg:approx_cascade}
\raggedright{
\textbf{Input}: $G=(V,E)$, seed $r$, edge probabilities $p_e$, a set of pool-tested node groups $\Gamma=\Gamma_0 \cup \Gamma_1$. \\
\textbf{Output}: A tree $T_r$ rooted at $r$ and consistent with $\Gamma$.
}
\begin{algorithmic}[1] 
\For{each edge $e$}
\State Compute the cost of inclusion~$c_e = -\log{p_e}$ and cost of exclusion~$d_e=- \log{(1-p_e)}$ 

\EndFor
\State Construct a node and edge-weighted graph $G'$ such that,
\For{each node $u\in G$}
    \State $w(u) \gets \sum_{e \in \mathcal{N}_e(u) } d_{e}$
\EndFor
\For{each edge $e\in G$}
    \State $w(e) \gets c_e-d_e$
\EndFor
\State Remove nodes $\bigcup_{g\in \Gamma_0}g $ from $G'$.
\State $T_r =$ \textsc{GroupSteinerTree}$(G', r, \Gamma_1)$ 
\State \textbf{return $T_r$}
\end{algorithmic}
\end{algorithm}

\subsection{\probOneHop{} problem}


Let $x_i$, $y_{ij}$ and $z_{ij}$ denote indicator variables for node $i\in V_0$ being seeded, disease transmission on edge $(i, j)$ and no disease spread on edge $(i, j)$, respectively.
We first describe an integer program, which doesn't exactly solve \probOneHop{}, but is a 2-approximation.

\begin{align*}
     \text{minimize} \quad & \sum_{i\in V_0} ( a_ix_i + b_i(1-x_i)) + \sum_{(i,j)\in E} ( c_{ij}y_{ij} + d_{ij}(z_{ij}) ) \\
    \text{subject to} \quad &    \sum_{j\in g}\left[  \sum_{i:(i,j)\in E} y_{ij}\right]  \geq 1,  \quad g\in \Gamma_1 \\
    & x_i - y_{ij} \geq  0 , \quad (i,j)\in E \\
    & z_{ij} \geq x_i, \quad (i,j) \in E\\
    & x_i \in \{0,1\} , \quad i \in V_0 \\
    & y_{ij} \in \{0,1\} , \quad (i,j) \in E
\end{align*}

\noindent
The first set of constraints corresponds to the requirement that in each group $g\in \Gamma_1$, at least one node should  be covered by a live-edge. The second set of constraints ensures that, a live-edge must begin at a seeded node. 
The third set of constraints incorporates the cost of non-infection.


\begin{lemma}
Let  $x, y, z$ denote the optimal integral solutions to the above program.
Then, $\sum_{i\in V_0} ( a_ix_i + b_i(1-x_i)) + \sum_{(i,j)\in E} ( c_{ij}y_{ij} + d_{ij}(z_{ij}) )\leq 2OPT$, where $OPT$ denotes the cost of the optimal solution to the instance of \probOneHop{}.
\end{lemma}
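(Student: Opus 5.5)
The plan is to compare the integer program with \probOneHop{} directly. Take an optimal cascade $A^\star$ for \probOneHop{} with cost $OPT=\CostOne(A^\star)$, and turn it into a feasible integral solution of the program whose objective is at most $2\,OPT$. Since the program's optimum is no larger than the objective of any feasible point, the lemma follows.

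\textbf{Encoding $A^\star$.} Set $x_i=1$ exactly for $i\in V_0(A^\star)$, set $y_{ij}=1$ exactly for $(i,j)\in E(A^\star)$, and set $z_{ij}=x_i$. Feasibility is checked constraint by constraint.
\begin{itemize}
\item The covering constraint holds: $A^\star$ is consistent, so every $g\in\Gamma_1$ contains an infected node $j\in V_1$. In the one-hop model such a node is infected through some live edge $(i,j)\in E(A^\star)$, so $\sum_{j\in g}\sum_i y_{ij}\ge 1$.
\item The constraint $x_i\ge y_{ij}$ holds because every live edge of a one-hop cascade starts at a seed.
\item The constraint $z_{ij}\ge x_i$ holds with equality.
\end{itemize}
Negative pools do not constrain the program, so nothing further needs checking there.

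\textbf{Charging the cost.} Split the program objective into the seeding terms $\sum_i(a_ix_i+b_i(1-x_i))$, the live-edge terms $\sum c_{ij}y_{ij}$, and the $d$-terms $\sum_{(i,j)} d_{ij}x_i=\sum_{i\in V_0(A^\star)}\sum_{j:(i,j)\in E}d_{ij}$.
\begin{itemize}
\item The seeding and live-edge terms are exactly the corresponding terms of $\CostOne(A^\star)$. (If the $b_v$ sum in $\CostOne$ also ranges over $V_1$, that only adds nonnegative terms to $OPT$.)
\item The $d$-terms charge $d_{ij}$ on every edge leaving a seed, including the live ones. $\CostOne(A^\star)$ only pays $d_{ij}$ on $\sigma_{A^\star}\setminus E(A^\star)$.
\item So the excess of the program objective over $OPT$ is exactly $\sum_{(i,j)\in E(A^\star)}d_{ij}$.
\item By Assumption~\ref{prob_assume}, $d_{ij}\le c_{ij}$, so this excess is at most $\sum_{(i,j)\in E(A^\star)}c_{ij}\le OPT$.
\end{itemize}
Hence the objective at this feasible point is at most $2\,OPT$, which gives the claim.

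\textbf{Main obstacle.} The delicate part is matching the two cost accountings exactly. One point is the meaning of $\sigma_A$: I read it as the edges incident to seeds of $A$, which is the only reading under which the non-transmission cost is meaningful. The other point is checking that the program's $d$-terms overcount only on live edges. That is precisely the factor-$2$ slack, paralleling the $\sum_{e\in E(T)}d_e$ overcount in Lemma~\ref{lem:pool2gst}.
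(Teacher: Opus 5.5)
Your proposal is correct and follows essentially the same route as the paper: encode an optimal \probOneHop{} cascade, set $z_{ij}=x_i$, and charge the extra $d_{ij}$ paid on live edges to $c_{ij}$ via Assumption~\ref{prob_assume}. The paper does this charging edge by edge, while you do it in aggregate and also verify the program's constraints explicitly, a step the paper omits.
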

\noindent
The proof is in the supplement.


\noindent
\textbf{LP relaxation. } We relax the integrality constraints by replacing them with $x_i, y_{ij} \geq 0, \forall i\in V, (i,j)\in E$. Let $\{x_i^*\}_{i\in V}, \{y_{ij}^*\}_{(i,j)\in E}$ be the optimal LP solution. 

\begin{algorithm}
    \caption{\textsc{RoundCascade}}
    \label{alg:round_cascade}
    \raggedright{
    \textbf{Input: } A \probOneHop{} instance\\
    \textbf{Output: } A cascade $A$ consisting of set $\{X_i: i\in V_0\}$ and edges $\{Y_{ij}: (i, j)\in E\}$
    }
    \begin{algorithmic}[1]
        \State Solve the LP to get $\{x_i^*\}_{i\in V}, \{y_{ij}^*\}_{(i,j)\in E}$.
        \For{each $i\in V$}
        \State Independently pick a number $\tau_i\in [0,1]$ uniformly at random.
        \State Set $X_i=1$ if $\alpha x_i^*>\tau_i$, otherwise set $x_i=0$.
        \EndFor
        \For{each $(i,j) \in E$}
            \If{$\alpha y_{ij}^*>\tau_i$}
                \State Set $Y_{ij}=1, X_i = 1$, $Z_{ij}=0$.
            \Else
\State Set $Y_{ij}=0$
and $Z_{ij} = X_i$ 
            \EndIf
        \EndFor
    \end{algorithmic}
\end{algorithm}




\begin{lemma}
Let $X, Y, Z$ denote the solution output by \textsc{RoundCascade}.    Then, $E[\sum_i (a_i-b_i) X_i + \sum_{ij} c_{ij} Y_{ij} + d_{ij} Z_{ij}] \leq \alpha~OPT^*$ where $OPT^*$ is the optimal objective value of the LP, and $\alpha = 1+\ln |\Gamma_1|$.
\end{lemma}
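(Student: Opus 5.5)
The plan is to bound the expectation term by term, using linearity of expectation. Each indicator produced by \textsc{RoundCascade} will be compared with $\alpha$ times the matching LP variable. The objective is first rewritten as $\sum_i b_i + \sum_i (a_i-b_i)x_i + \sum_{ij}(c_{ij}y_{ij}+d_{ij}z_{ij})$. The constant $\sum_i b_i$ is dropped, which is why the statement bounds only the variable part. I read $OPT^*$ as the LP value of that part, $\sum_i (a_i-b_i)x_i^* + \sum_{ij}(c_{ij}y^*_{ij}+d_{ij}z^*_{ij})$. I also use two facts about the LP optimum: $z^*_{ij}=x^*_i$, since $d_{ij}\ge 0$ and only $z_{ij}\ge x_i$ constrains it; and $y^*_{ij}\le x^*_i$.

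\textbf{Step 1: edge indicators.} The event $Y_{ij}=1$ is $\{\tau_i<\alpha y^*_{ij}\}$, so $\Pr[Y_{ij}=1]=\min(1,\alpha y^*_{ij})\le \alpha y^*_{ij}$. Since $c_{ij}\ge 0$, this gives $E[c_{ij}Y_{ij}]\le \alpha c_{ij}y^*_{ij}$.

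\textbf{Step 2: seed indicators.} Because $y^*_{ij}\le x^*_i$, the event $\alpha y^*_{ij}>\tau_i$ implies $\alpha x^*_i>\tau_i$. So the edge loop never switches on a seed the node loop left off, and $X_i=1$ exactly when $\tau_i<\alpha x^*_i$. This gives $\Pr[X_i=1]=\min(1,\alpha x_i^*)$.

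\textbf{Step 3: non-transmission indicators.} $Z_{ij}=1$ requires $X_i=1$, so $\Pr[Z_{ij}=1]\le \min(1,\alpha x^*_i)\le \alpha x_i^* = \alpha z^*_{ij}$. Hence $E[d_{ij}Z_{ij}]\le \alpha d_{ij}z^*_{ij}$.

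\textbf{Step 4: the seed costs, which I expect to be the main obstacle.} $E[(a_i-b_i)X_i]=(a_i-b_i)\min(1,\alpha x_i^*)$. When $a_i\ge b_i$, i.e.\ $p_i^0\le 1/2$ (the same regime as Assumption~\ref{prob_assume}), this is at most $\alpha(a_i-b_i)x_i^*$. Summing Steps~1--4 then gives the lemma. If $a_i<b_i$ for some $i$, the inequality can reverse when $\alpha x_i^*>1$. For such nodes I would first argue that an optimal LP solution can be assumed to have $x^*_i=1$, and that rounding then leaves $X_i=1$ deterministically. The contribution is then $(a_i-b_i)\le \alpha(a_i-b_i)x_i^*$ only if $\alpha\le1$, which is false, so this case would instead need to be absorbed by moving $b_i$ back into the comparison. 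I would state the lemma under $a_i\ge b_i$ if that fails.

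\textbf{Where $\alpha=1+\ln|\Gamma_1|$ is used.} The cost bound above works for any $\alpha\ge 1$; the specific value matters only for feasibility, which belongs to the companion argument. A group $g\in\Gamma_1$ is uncovered only if every edge $(i,j)$ with $j\in g$ has $\tau_i\ge \alpha y^*_{ij}$. When edges with distinct tails use independent $\tau_i$, the probability of this is at most $\prod_i (1-\alpha\max_j y^*_{ij})$. A set-cover style bound, after scaling, gives roughly $e^{-\alpha}\le 1/(e|\Gamma_1|)$, and a union bound over the groups then makes the output consistent with constant probability. I would mention this only to explain the choice of $\alpha$, since the lemma itself asks only for the expectation bound.
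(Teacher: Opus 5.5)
Your Steps 1--3 are the argument the paper leaves implicit when it says the bound ``follows directly from the randomized rounding.'' They use linearity of expectation, $\Pr[Y_{ij}=1]=\min(1,\alpha y^*_{ij})$, the fact that $y^*_{ij}\le x^*_i$ keeps the edge loop from changing $X_i$, and $Z_{ij}\le X_i$ together with $z^*_{ij}\ge x^*_i$. These steps are correct. You are also right that the particular value of $\alpha$ plays no role in this lemma.

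The gap is in Step 4, and it comes from how you read $OPT^*$. The statement's $OPT^*$ is the full LP value, which includes the constant $\sum_i b_i$. By subtracting that constant you set out to prove a stronger inequality, and that inequality can fail when some $a_i<b_i$ and $\alpha>1$; your own deterministic-rounding scenario shows this. Your proposed rescue also relies on a false claim. When $a_i<b_i$, an optimal LP solution need not have $x^*_i=1$. Each unit of $x_i$ forces a unit of every $z_{ij}$, so its marginal cost is $a_i-b_i+\sum_j d_{ij}$. If that is positive, the LP sets $x^*_i=\max_j y^*_{ij}$, which may be $0$.

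The case closes easily once you keep the literal $OPT^*$, whose node part is $\sum_i\bigl(a_ix^*_i+b_i(1-x^*_i)\bigr)$, and compare node by node:
\begin{itemize}
\item If $a_i\ge b_i$, then $E[(a_i-b_i)X_i]\le\alpha(a_i-b_i)x^*_i\le\alpha\bigl(a_ix^*_i+b_i(1-x^*_i)\bigr)$, because $b_i\ge 0$.
\item If $a_i<b_i$, then $E[(a_i-b_i)X_i]\le 0\le\alpha\bigl(a_ix^*_i+b_i(1-x^*_i)\bigr)$, provided $0\le x^*_i\le 1$.
\end{itemize}
Adding Steps 1 and 3 then gives the lemma with no restriction on $p^0_i$. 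The bound $x_i\le 1$ has to be in the relaxation anyway. With only $x_i\ge 0$, as the paper writes it, a node with $b_i-a_i>\sum_j d_{ij}$ makes the LP unbounded below, so $OPT^*$ would not exist.
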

This follows directly from the randomized rounding.
Finally, we show that the solution is feasible.

\begin{lemma}
\label{lem:prob_bound}
Let $A$ denote the solution computed by \textsc{RoundCascade}.
Then, $Pr[A \mbox{ is infeasible}] \leq k e^{-\alpha} \leq 1/n^c$ for $k=\Omega(\log{n})$.
\end{lemma}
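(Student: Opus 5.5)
The plan is to run the standard set-cover-style analysis of randomized rounding, pool by pool, and finish with a union bound. Infeasibility of $A$ can only mean that some positive pool $g\in\Gamma_1$ has no live edge into it; that is, $Y_{ij}=0$ for every $j\in g$ and every $i$ with $(i,j)\in E$. The other consistency conditions hold by construction. The second LP constraint is enforced because \textsc{RoundCascade} sets $X_i=1$ whenever some $Y_{ij}=1$. The nodes of negative pools should be handled by deleting them (or fixing their incoming $y_{ij}=0$) before solving the LP, exactly as $S_0$ is removed in \textsc{ApproxCascade}.

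The first step is a bound for a single fixed $g\in\Gamma_1$. Group the edges entering $g$ by their seed endpoint $i$, and write $s_i=\max_{j\in g}y^*_{ij}$. The rounding uses one threshold $\tau_i$ per seed, so all edges leaving $i$ are coupled. Consequently seed $i$ contributes a live edge into $g$ exactly when $\tau_i<\alpha s_i$, which happens with probability $\min(1,\alpha s_i)$. Distinct seeds use independent $\tau_i$, so
\[
\Pr[g \text{ uncovered}]=\prod_i\bigl(1-\min(1,\alpha s_i)\bigr)\le \exp\Bigl(-\alpha\sum_i s_i\Bigr).
\]

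The second step is to show $\sum_i s_i\ge 1$, and this is where I expect the real obstacle. The LP constraint only gives $\sum_{j\in g}\sum_i y^*_{ij}\ge 1$, whereas $\sum_i s_i$ takes a maximum over $j$ rather than a sum. The coupling through $\tau_i$ means that several edges from one seed into $g$ do not give independent chances. I would try first to argue that an optimal LP solution can be taken with at most one positive $y^*_{ij}$ per pair $(i,g)$. Moving mass from $y^*_{ij'}$ to the cheapest $y^*_{ij}$ with $j\in g$ keeps the constraint for $g$ satisfied. It does not increase cost, since only $c$-costs are involved and $z$ depends only on $x$. It also respects $y_{ij}\le x_i$ after capping by $x_i$.

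If that exchange argument fails because a node lies in several pools, I would fall back on one of two alternatives. The first is to reformulate the rounding with thresholds $\tau_{ij}$ drawn independently per edge, keeping $X_i$ as the OR of its live edges. The second is to accept a bound $\sum_i s_i\ge 1/|g|$ and replace $\alpha$ by $\alpha\cdot\max_g|g|$. Either way, the target is $\Pr[g\text{ uncovered}]\le e^{-\alpha}$.

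The final step is a union bound over the $k=|\Gamma_1|$ positive pools, which gives $\Pr[A\text{ infeasible}]\le k e^{-\alpha}$. With $\alpha=1+\ln k$ this quantity is only $1/e$, so the high-probability claim $1/n^c$ requires $\alpha$ to be scaled to roughly $(c+1)\ln n$. That scaling is of the same order as $\ln k$ when $k=\Omega(\log n)$ and $k$ is polynomial in $n$, which I read as the intended meaning of the hypothesis. I would state the calculation $k e^{-\alpha}\le k\,n^{-(c+1)}\le n^{-c}$ under that choice. Alternatively, repeating the rounding $O(\log n)$ times amplifies the constant success probability $1-1/e$ to the required bound without affecting the $O(\log k)$ cost guarantee.
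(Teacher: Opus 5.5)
Your overall plan (a per-pool bound followed by a union bound over the $k$ positive pools) is the same as the paper's. Your step~1 is actually more careful than the paper's proof. The paper writes $\Pr[g\text{ uncovered}]$ as $\prod_{j\in g}\prod_{i}(1-\alpha y^*_{ij})$, which treats all edges into $g$ as independent. That is false when two of them leave the same seed and share $\tau_i$, and your $\prod_i(1-\min(1,\alpha s_i))$ is the correct expression.

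\textbf{The gap.} Step~2 cannot be closed for \textsc{RoundCascade} as written. Moving mass from $y^*_{ij'}$ onto $y^*_{ij}$ and then capping at $x^*_i$ discards exactly the mass the covering constraint needs. This happens even when pools are disjoint, and the LP optimum really does spread its mass. Take one seed $i$ adjacent to all $m$ nodes of a single positive pool $g$, with homogeneous probabilities below $1/2$.
\begin{itemize}
\item Since $z_{ij}=x_i$ at optimality, the LP minimizes $b_i+(a_i-b_i+\sum_j d_{ij})x_i+\sum_j c_{ij}y_{ij}$ subject to $\sum_j y_{ij}\ge 1$ and $y_{ij}\le x_i$.
\item This forces $x_i\ge 1/m$, and the unique optimum is $x^*_i=y^*_{ij}=1/m$.
\item Then $s_i=1/m$, and all edges into $g$ go live together exactly when $\tau_i<\alpha/m$.
\item So $\Pr[A\text{ infeasible}]=1-\alpha/m$ for $m\ge\alpha$. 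This exceeds $ke^{-\alpha}$ for any fixed $\alpha$ once $m$ is large; with $k=1$, $\alpha=1$, $m=10$ it is $0.9>1/e$.
\end{itemize}
So the first inequality of the lemma is false for the algorithm as stated, and neither your argument nor the paper's can establish it.

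\textbf{Your fallbacks.} They change the statement rather than prove it.
\begin{itemize}
\item Independent per-edge thresholds $\tau_{ij}$ give exactly the paper's product bound, but for a different algorithm. They also destroy $\Pr[X_i=1]\le\alpha x^*_i$, on which the companion expected-cost lemma rests. In the star, $\Pr[X_i=1]\approx 1-e^{-\alpha}$ while $x^*_i=1/m$.
\item The inequality $\sum_i s_i\ge 1/|g|$ is correct, but it yields only $e^{-\alpha/|g|}$. That gives a lemma with $\alpha$ inflated by the largest pool size, and hence a correspondingly worse approximation factor.
\end{itemize}

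\textbf{The missing idea.} A repair needs a stronger LP. For instance, add $\sum_{j\in g}y_{ij}\le x_i$ for every seed $i$ and positive pool $g$. Some optimal integral solution satisfies this when pools are disjoint, since a second live edge into an already covered pool only adds cost. Then round each seed's edges into $g$ dependently: use consecutive disjoint subintervals of $[0,\alpha x^*_i)$ for the shared $\tau_i$, so that $i$ covers $g$ with probability $\min(1,\alpha\sum_{j\in g}y^*_{ij})$. The analysis of your step~1 then gives $e^{-\alpha}$ directly.

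\textbf{The $1/n^c$ part.} You are right that $ke^{-\alpha}=1/e$ when $\alpha=1+\ln k$, so the second inequality also fails as stated. Your reconciliation does not hold, though: $k=\Omega(\log n)$ only gives $\ln k=\Omega(\ln\ln n)$, whereas $(c+1)\ln n=O(\ln k)$ needs $k\ge n^{\delta}$. Amplification by repetition needs a constant per-run success probability. The star example rules that out too: $k$ disjoint copies are all covered with probability $(\alpha/m)^k$.
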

\begin{proof}
We show that for any $g\in \Gamma_1$, $\Pr[g $ is not connected by $ A]
    \leq e^{-\alpha}$.
    We have,
    \begin{align*}
        &\Pr[g \text{ is not connected by } A] 
        =\Pr[\bigcap_{j\in g}(j \text{ not in A})] \\
        &= \prod_{j\in g}\Pr[(j\text{ is not seeded})\cap (j \text{ is not covered by a live-edge})] \\
        &=\prod_{j\in g}\Pr[j\text{ is not seeded}]\Pr[j \text{ is not covered by a live-edge})] \\
        &=\prod_{j\in g}\left[ (1-\alpha x_j^*)\prod_{i:(i,j)\in E} (1-\alpha y_{ij}^*) \right]
        \leq \prod_{j\in g} \left[e^{-\alpha x_j^*}\prod_{i:(i,j)\in E}e^{-\alpha y_{ij}^*} \right]\\
        &= e^{-\alpha\sum_{j\in g}\left(x_j^* + \sum_{i:(i,j)\in E}y_{ij}^*\right)} \leq e^{-\alpha} 
    \end{align*}
Above, we have used the independence of events, $\{j \text{ not in } A\}_{j\in g}$, and of the events, ($j$ is not seeded) and ($j$ is not covered by a live-edge).
In the last line, we have used the first set of constraints in the LP.

Therefore,  $Pr[A \mbox{ is infeasible}]= Pr[\bigcup_{g\in \Gamma_1}\text{ g not connected by A}]  \leq \prod_{j\in X}\Pr[\text{ j is not in A }]\leq ke^{-\alpha}$ by union bound.
\end{proof}

\begin{theorem}
    \textsc{RoundCascade} is a randomized $(2+2\ln k)$-approximation algorithm for \probOneHop{}, where $|\Gamma_1|=k$.
\end{theorem}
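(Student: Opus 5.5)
The plan is to chain the three preceding lemmas: the integer program is a 2-approximation of \probOneHop{}, the LP relaxation lower-bounds the integer program, and \textsc{RoundCascade} loses a factor $\alpha = 1+\ln k$ in expectation while producing a feasible cascade with high probability. Multiplying the two losses gives $2\alpha = 2+2\ln k$.

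\textbf{Step 1 (LP versus OPT).} Let $OPT$ be the optimal \probOneHop{} cost and $OPT_{IP}$ the optimal integer program value. By the integer-program lemma, $OPT_{IP}\le 2\,OPT$. Since the LP is a relaxation, $OPT^*\le OPT_{IP}\le 2\,OPT$. I will also note that the constant term $\sum_i b_i$ appears in both the IP objective and the \CostOne{} expression. It is therefore harmless to compare the rounded objective written as $\sum_i b_i + \sum_i (a_i-b_i)X_i + \dots$.

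\textbf{Step 2 (expected cost of the rounded solution).} Apply the rounding lemma: the variable part of the cost has expectation at most $\alpha\,OPT^*$. Each $X_i$ and $Y_{ij}$ is set to one with probability at most $\alpha x_i^*$ or $\alpha y_{ij}^*$; the $Y$ events are coupled to $X_i$ via the shared threshold $\tau_i$, so forcing $X_i=1$ costs nothing extra in probability when $y^*_{ij}\le x^*_i$. Also $Z_{ij}\le X_i$. The additive constant $\sum_i b_i$ is at most $\alpha\sum_i b_i$, since $\alpha\ge 1$. Together with Step~1 this gives $E[\text{cost}]\le \alpha\,OPT^*\le (2+2\ln k)\,OPT$.

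\textbf{Step 3 (the rounded solution is an actual cascade of at most that cost).} I must check that the rounded triple $(X,Y,Z)$, interpreted as a cascade $A$, has $\CostOne(A)$ bounded by the rounded objective. Every live edge starts at a seeded node, because $Y_{ij}=1$ forces $X_i=1$. Every non-live edge out of a seed is charged through $Z_{ij}=X_i$, which matches the $d_e$ terms over $\sigma_A\setminus E(A)$. For feasibility, Lemma~\ref{lem:prob_bound} shows $A$ is infeasible with probability at most $1/n^c$. Conditioning on feasibility then changes the expectation by at most a factor $1/(1-n^{-c})$. Alternatively, one can repeat the rounding until a feasible solution appears, with $O(1)$ expected repetitions, and use Markov's inequality to get a constant-probability guarantee. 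This yields a randomized $(2+2\ln k)$-approximation, up to lower-order terms that the statement absorbs.

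\textbf{Main obstacle.} The delicate step is Step~3: making the expectation bound and the feasibility event coexist. The expectation bound is unconditional while we need it on the feasible event. I would handle this via $E[\text{cost}\mid\text{feasible}]\le E[\text{cost}]/\Pr[\text{feasible}]$, using the nonnegativity of all costs. A second wrinkle is ensuring that negative values of $a_i-b_i$ do not break the rounding bound. For these I would simply bound the seeding cost by $\sum_i b_i + \sum_i a_iX_i$, which is nonnegative termwise.
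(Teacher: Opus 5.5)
Your chain (integer program at most $2\,OPT$, LP at most the integer program, rounding loses a factor $\alpha$, feasibility from Lemma~\ref{lem:prob_bound}) is exactly the paper's route. Your Step~3 is also more careful than the paper about combining an unconditional expectation with a feasibility event. But Step~3 fails, because the feasibility lemma you import is false, and the paper's argument has the same hole. Even granting its independence claim, $ke^{-\alpha}=ke^{-1-\ln k}=1/e$ for every $k$. So the union bound gives only $\Pr[\text{feasible}]\ge 1-1/e$, and your conditioning factor is $e/(e-1)$, which is a constant, not a lower-order term. More seriously, the independence claim fails for the very reason you use in Step~2: all edges out of $i$ share the threshold $\tau_i$. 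Hence the events $\{Y_{ij}=0\}_j$ are nested, and $\Pr[g\text{ uncovered}]=\prod_i\bigl(1-\min\{1,\alpha\max_{j\in g}y^*_{ij}\}\bigr)$ (over neighbours $i$ of $g$). The LP constraint, however, only controls $\sum_i\sum_{j\in g}y^*_{ij}$.

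Here is a concrete instance. Let $V_0=\{i\}$ and let the single positive pool be $g=V_1$, with $m$ nodes all adjacent to $i$ (so $k=1$, $\alpha=1$), homogeneous $p_e=p<1/2$, and $p^0_i=p^0\le 1/2$. The LP optimum is $x_i^*=y_{ij}^*=z_{ij}^*=1/m$, so the rounding is all-or-nothing in $\tau_i$. With probability $1-1/m$ the pool is uncovered. With probability $1/m$ you seed $i$ with all $m$ edges live, at cost $a+mc$, while $OPT=a+c+(m-1)d$. For $p=0.01$, $p^0=0.1$, $m=10$ that is about $6.9\,OPT$, against the claimed factor $2$. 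Conditioning or rerunning cannot help, because $E[\text{cost}]\le\alpha\,OPT^*$ holds only thanks to the infeasible runs.

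The root cause is the relaxation itself: $y_{ij}\le x_i$ lets a $1/m$-fraction of a seed cover the whole pool. The integrality gap therefore tends to $m$ as $p^0\to 0$, and no argument bounding a \emph{feasible} rounded cascade by $f(k)\,OPT^*$ can succeed. You would need a different LP or a bound that depends on pool size.

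Separately, your fix for $a_i<b_i$ does not work. It leaves the additive term $\sum_i b_i$, which is not $O(OPT)$ when $p^0_i$ is near $1$. Bound each node directly instead: $a_i\pi_i+b_i(1-\pi_i)\le\alpha\bigl(a_ix_i^*+b_i(1-x_i^*)\bigr)$ with $\pi_i=\min\{1,\alpha x_i^*\}$, using $\alpha\ge 1$ and $x_i^*\le 1$.
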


\section{Experimental Results}
\label{sec:exp}
\begin{table*}[h]
    \caption{Networks and their properties. Shortest path length for the small-city network is omitted as its relevance depends on the diffusion rate~$\beta$.}
    \label{tab:datasets}
\begin{center}
\begin{normalsize}
\begin{tabular}{|c|p{0.6in}|p{0.6in}|p{0.6in}|p{0.6in}|p{1in}|}
    \hline
    \textbf{Graph Name} & \textbf{Nodes} & \textbf{Edges} & \textbf{Clustering coefficient} & \textbf{Avg. shortest path length} & \textbf{Note}\\
    \hline \hline 
    BA $m=3$ & 1000 & 2991 & 0.036 & 3.477 & Synthetic network\\
    \hline 
    $G(n,q)$ random graph & 1000 & 9974 & 0.0197 & 2.6398 & Synthetic network\\
    \hline
   \texttt{hospital-icu} & 879 & 3575 & 0.59 & 4.31 & Real-world network\\
    \hline
   \texttt{small-city} & 10001 & 52575 & 0.349 & -- & Realistic synthetic weighted network.\\
    \hline
    \hline
\end{tabular}\\
\end{normalsize}
\end{center}
\end{table*}
\subsection{Dataset and Methods}
We experimentally evaluate the performance of \textsc{ApproxCascade} and \algoOne{} 
using the networks listed in Table \ref{tab:datasets} and described below.
\begin{enumerate}
\item
\texttt{Barabasi-Albert (BA) networks}: Since many real-world networks have been observed to be in the class of scale-free networks, BA networks are a useful model. We generate random BA networks with the number of edges for each new node, $m=3$.
\item
\texttt{\erdosrenyi{} graphs}: We generate a $G(n=1000,q=0.02)$ graph for evaluating \textsc{ApproxCascade}. 
\item
\texttt{hospital-icu}:
This is a contact network between patients and healthcare workers built from Electronic Heath Records (EHR) for the UVA Hospital ICU for the time period between Jan 1, 2018 to Jan 8, 2018.
\item
\texttt{small-city}:
This is a subgraph of the Virginia~(VA) digital twin-based network made available at ~\cite{virginiadata} and used previously by~\cite{chen2025epihiper,hoops2021high}, created on a synthetic population based on census data and community surveys. Each edge $e$ has a contact duration $d_e$, which is used to generate the edge-diffusion probability $p_e=1-e^{-\beta d_e}$; $\beta$ is a parameter to control transmissibility.

\end{enumerate}

\noindent
\textbf{Baselines.} Since this problem has not been considered before, we design the following baselines.
\begin{enumerate}
    \item \textsc{ApproxCascade-Random} and \textsc{RoundCascade-Random}: Instead of choosing the infected nodes in a positive pool cost-effectively, randomly choose one node to be the only infection in that pool, reducing to a pool size of 1. 
    \item \textsc{ApproxCascade-All}: Here, we consider every node in a positive pool to be infected, also reducing to pool sizes of 1.
\end{enumerate}
Both these baselines reduce the pool size to~1 which can now be solved by 
the state-of-the-art method in \cite{mishra2023reconstructing}. For speed
and scalability, we deploy Charikar~et.~al.'s node-weighted Steiner tree 
solver as required by that method~\cite{charikar1999}. Our goal is to 
understand the conditions where it is important to systematically select the likely infections within a pool.

\noindent
\textbf{Method.} 
 We randomly select a fixed proportion of nodes in a given network to be pooled according to a parameter, \emph{pool ratio} set to $\{0.5, 0.9\}$. We randomly bunch these selected nodes into equal-sized pools whose size is determined by a parameter, \emph{pool size} which can be $\{3, 5, 7, 9\}$. 
 \underline{\algo{}:}
 We generate IC simulations starting at a random seed and determine the subset of positive pools which are pools having at least one infected node.
\underline{\algoOne{}:} We first construct a time-expanded version of the network, $G' =((V'_0,V'_1), E')$, where each node in $u \in V$ corresponds to two timestamped copies in $u_0\in V'_0, u_1 \in V'_1$, and each undirected edge $(u,v)$ corresponds to two copies $(u_0, v_1),(v_0, u_1)\in E'$. On $G'$, which is an oriented bipartite network, we assume the pooled observations come only from the $V'_1$ part. To generate infections, we randomly choose seeds from $V'_0$ with homogenous probability $p^0 \in \{0.01, 0.05, 0.10\}$, starting from which we simulate for one time-step.
For the unweighted networks, we use homogeneous diffusion probabilities $p=\{0.01, 0.05, 0.10, 0.20\}$ and for the weighted network, we use the parameter $\beta=\{2,3,5,7\} \times 10^{-6}$. We give the network, the disease parameters, the positive and negative pools (and in case of \algo{}, the seed) as input to our algorithms as well as the baselines.

\noindent
\textbf{Performance measures.} To evaluate the performance of our method, we have two tasks: 
\begin{enumerate}
\item \textit{Missing infection recovery}: We compare the node set of the reconstructed subgraph with the ground truth infection set. As metrics, we choose F1-score to quantify the success in missing infection recovery, as used in previous works~\cite{rozenshtein:kdd16,jang:icdm21,mishra2023reconstructing}. These are defined as follows.
Let the node-set of the reconstructed cascade $V_T$ and the ground truth infected node-set $V_G$. Define true positives as $TP = |V_T \cap V_G|$, true negatives as $TN = |V \setminus (V_T\cup V_G)|$, false positives as $FP = |V_T \setminus V_G|$, and false negatives as $FN=|V_G \setminus V_T|$. We compute F1-score $= {2TP}/{(2TP + FP + FN)}$.
\item \textit{Prevalence estimation}: We compare the sizes of the reconstructed cascade and the ground truth cascade using the relative error $e_{rel} = (|V_G|-|V_T|)/|V_G|$.
\end{enumerate}
The presented results are averaged over 50 replicate runs.




\subsection{Results}
\begin{figure*}[!htb]
    \centering
    \begin{subfigure}{\textwidth}
    \centering
    \includegraphics[width=.28\textwidth]{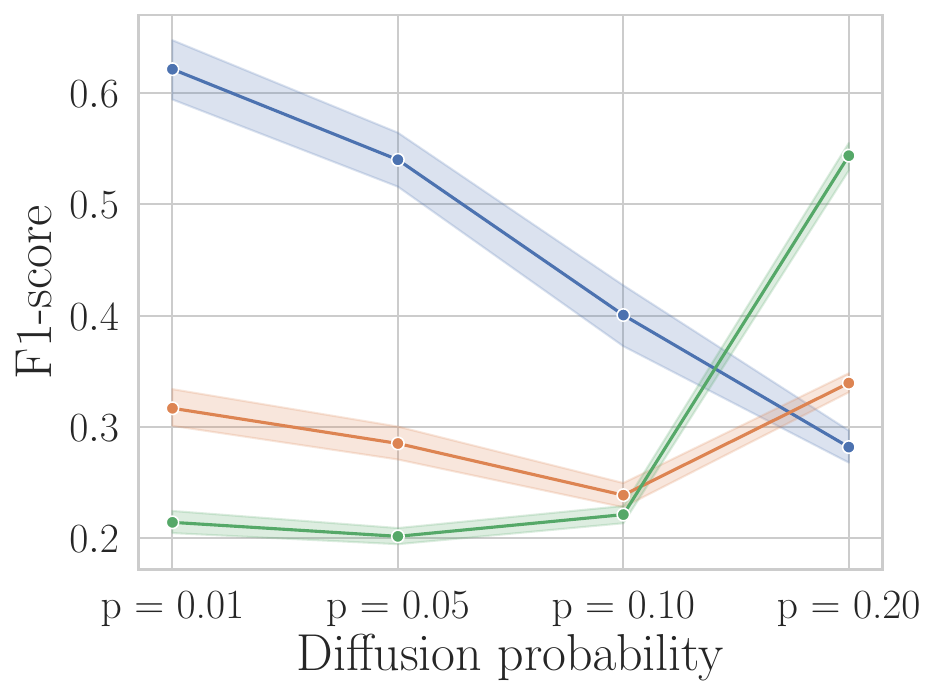}
    \includegraphics[width=.28\textwidth]{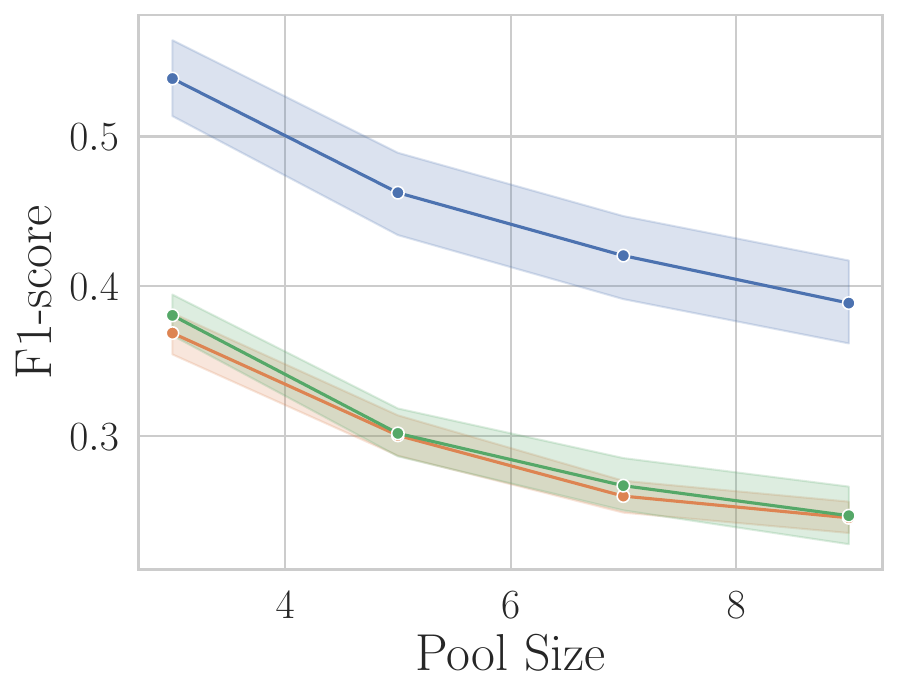}
    \includegraphics[width=.28\textwidth]{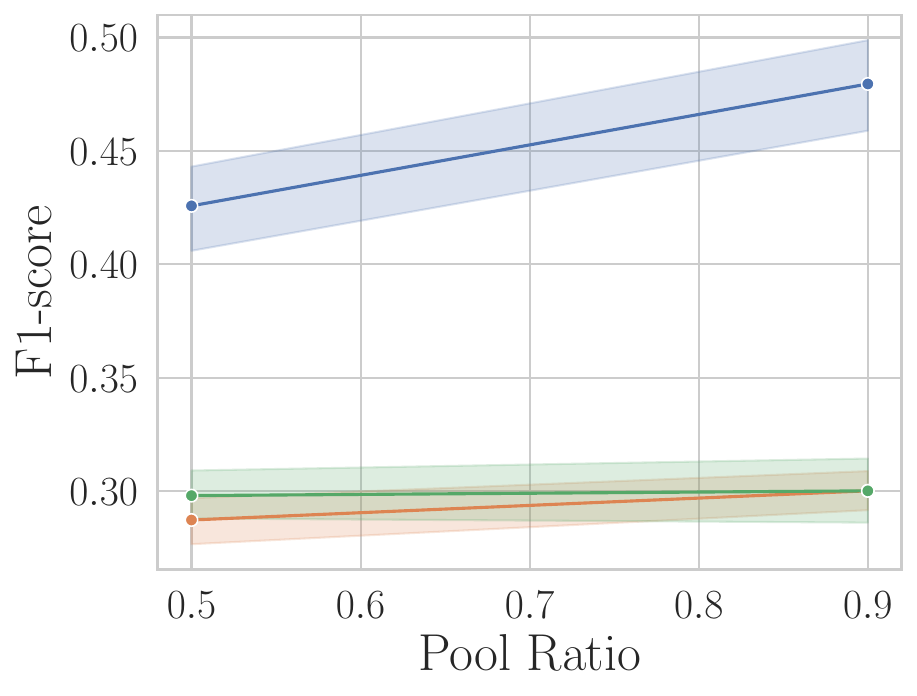}
    \caption{BA $m=3$}
    \end{subfigure}
    \begin{subfigure}{\textwidth}
    \centering
    \includegraphics[width=.28\textwidth]{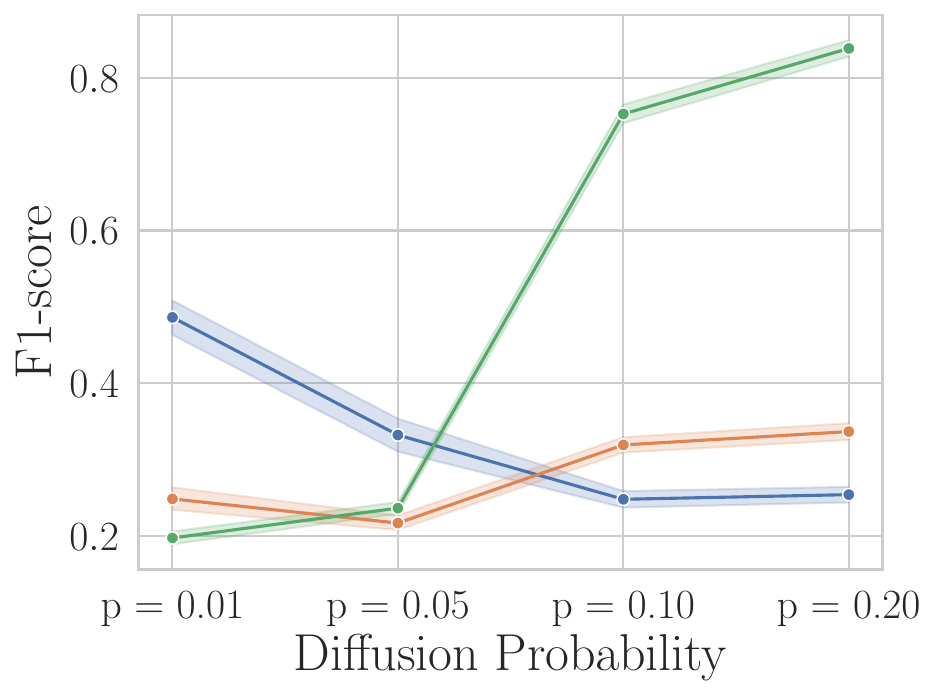}
    \includegraphics[width=.28\textwidth]{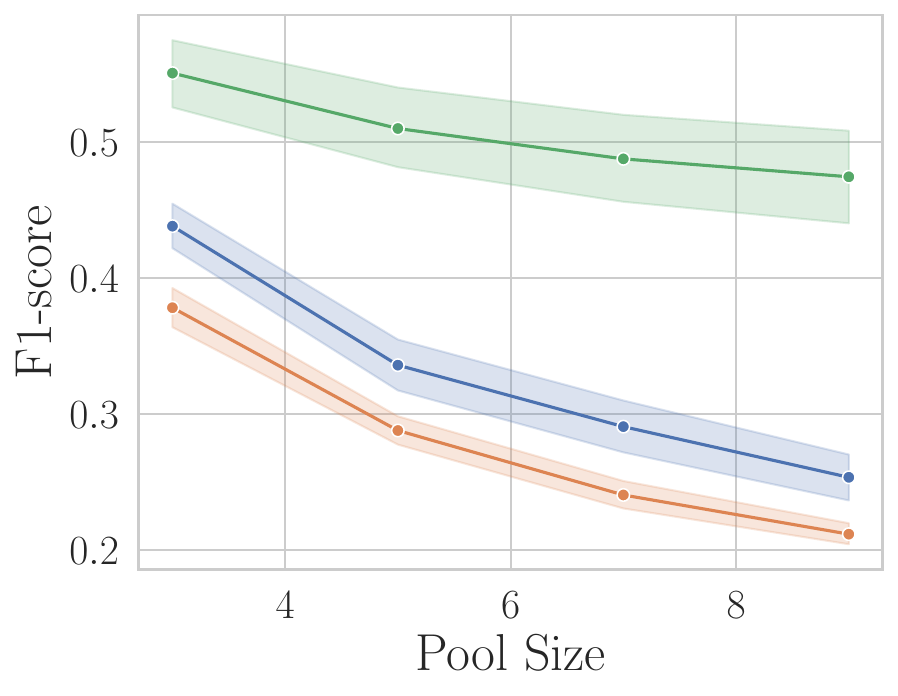}
    \includegraphics[width=.28\textwidth]{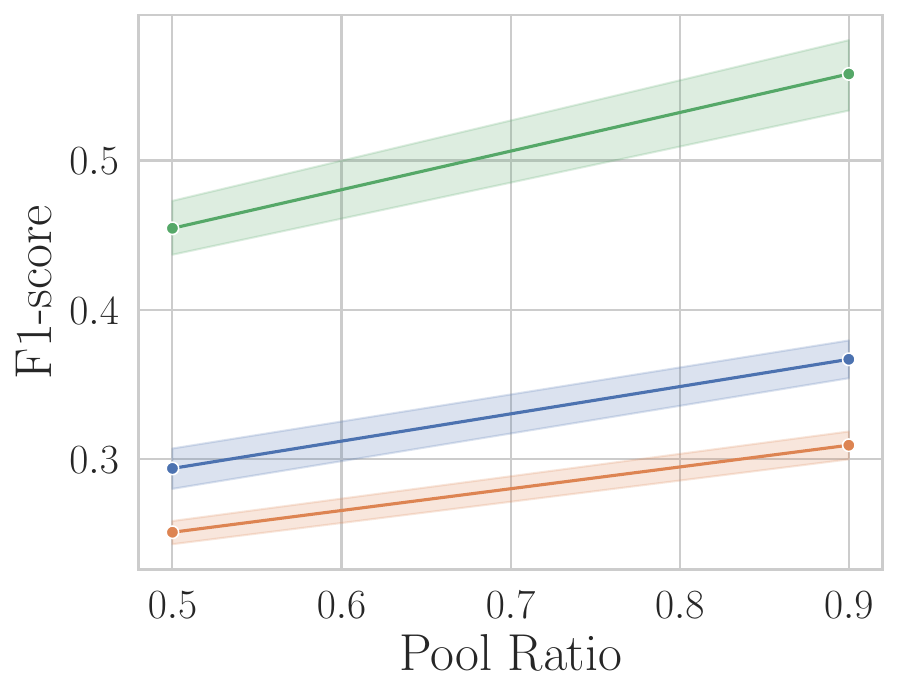}
    \caption{$G(n=1000,q=0.02)$}
    \end{subfigure}
    \begin{subfigure}{\textwidth}
    \centering
    \includegraphics[width=.28\textwidth]{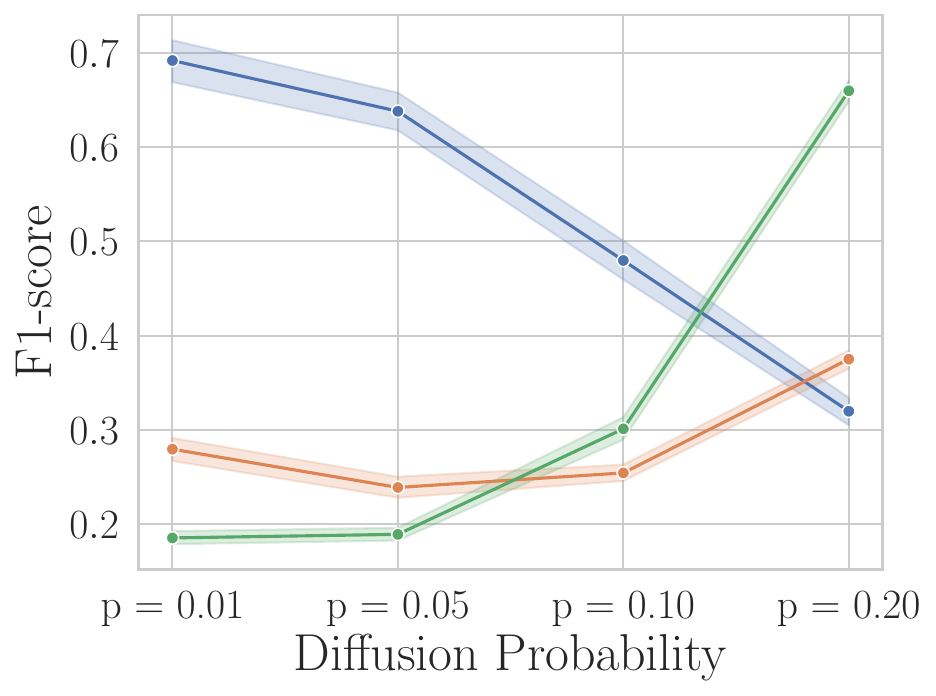}
    \includegraphics[width=.28\textwidth]{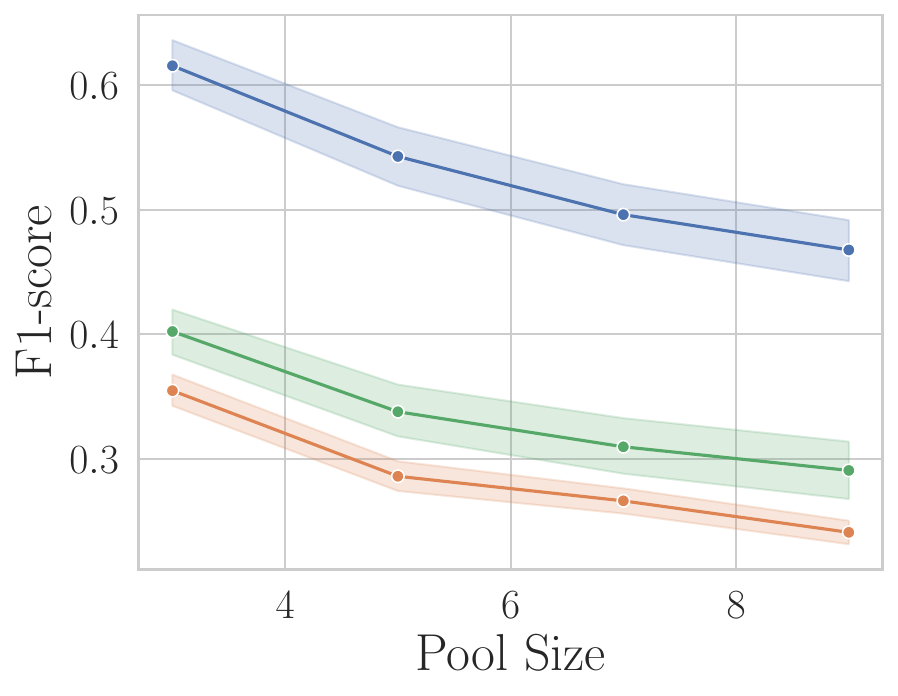}
    \includegraphics[width=.28\textwidth]{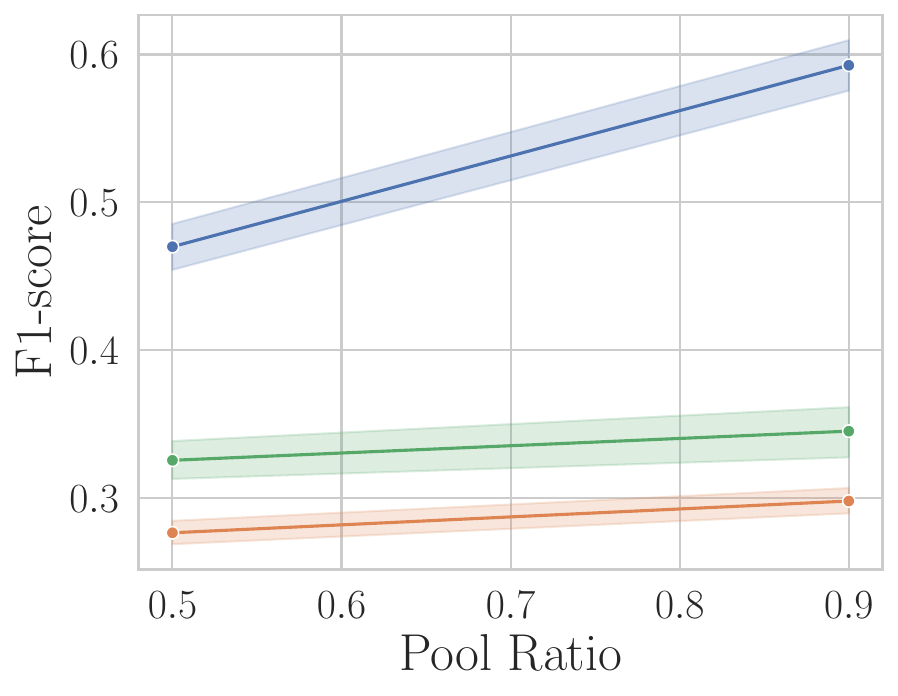}
    \caption{\texttt{hospital-icu}}
    \end{subfigure}
    \begin{subfigure}{\textwidth}
    \centering
    \includegraphics[width=.28\textwidth]{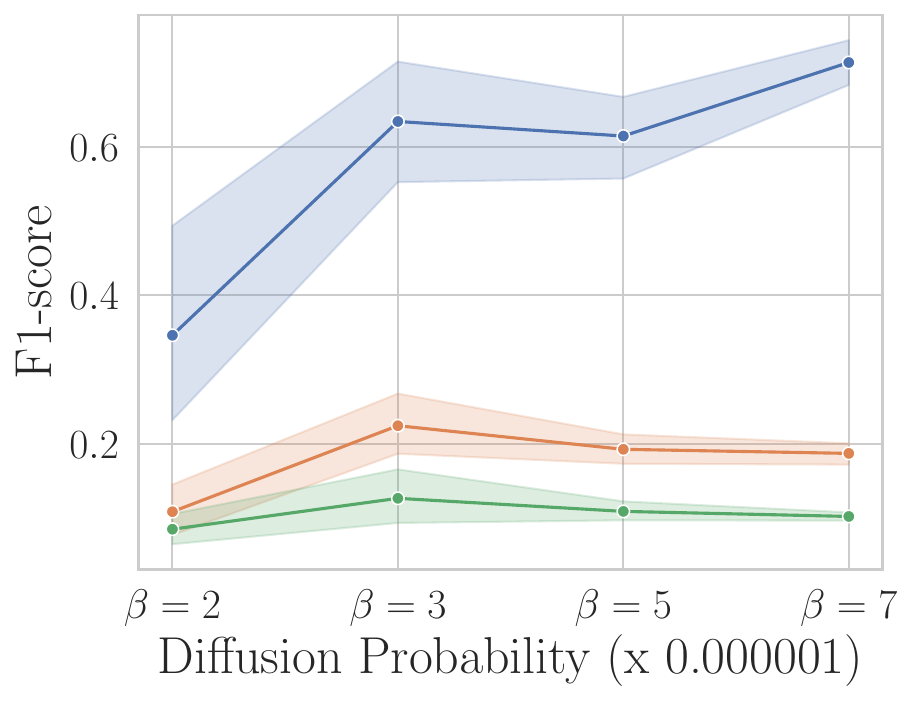}
    \includegraphics[width=.28\textwidth]{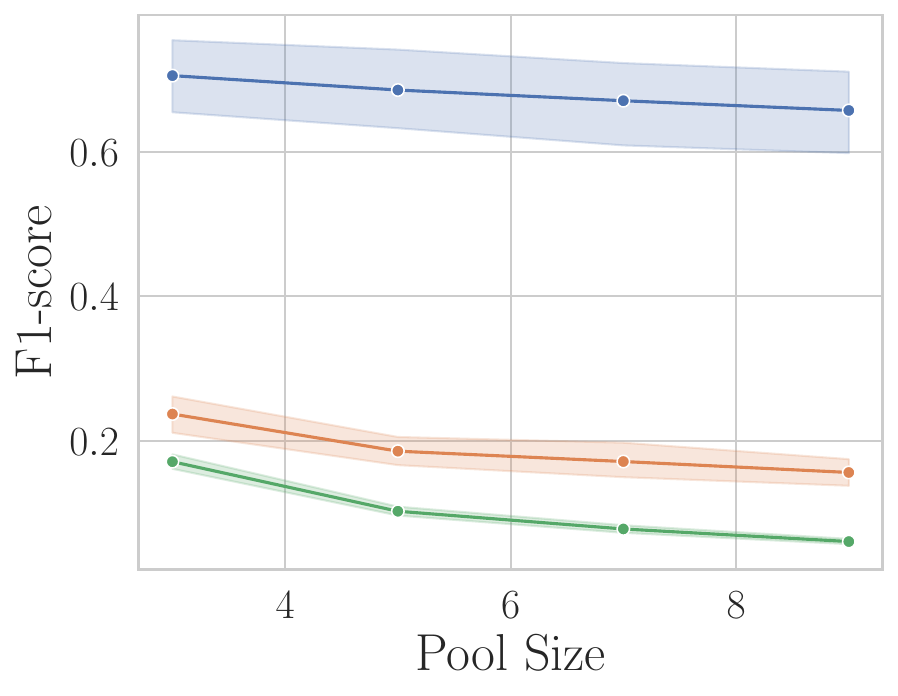}
    \includegraphics[width=.28\textwidth]{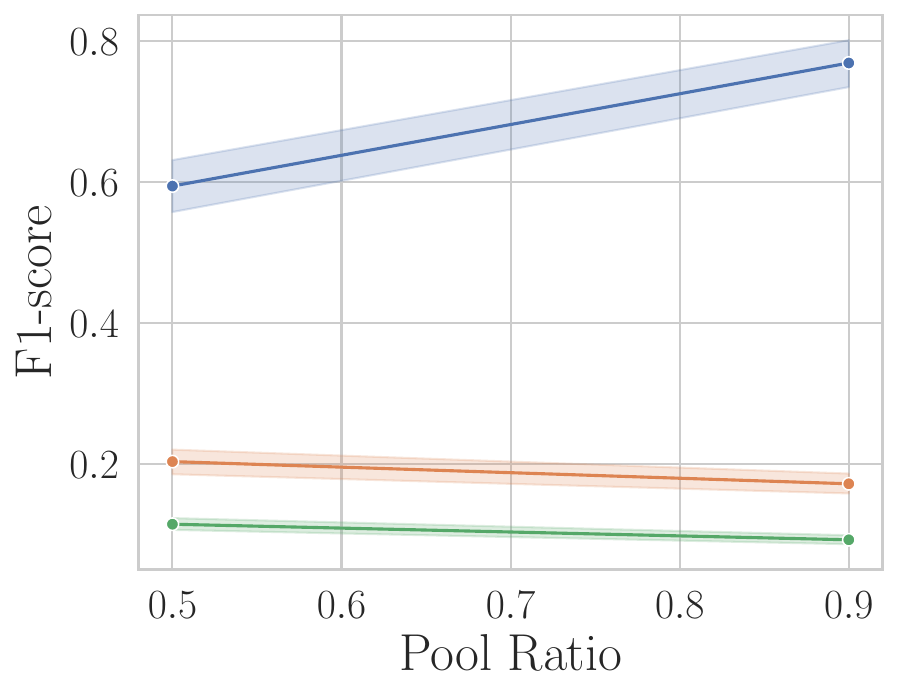}
    \caption{\texttt{small-city}}
    \end{subfigure}
    \includegraphics[width=0.6\textwidth]{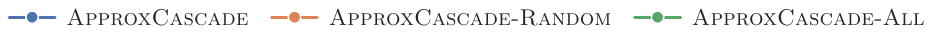}
    \caption{Performance of \textsc{ApproxCascade} against baselines.
\label{fig:performance}
}
\end{figure*}

\textbf{\algo{}}

\textit{Effect of network structure and diffusion probability. } In Figure~\ref{fig:performance}, we report the F1-scores for \textsc{ApproxCascade} and the baselines for different networks.
Across networks, we observe that \textsc{ApproxCascade} vastly outperforms the baselines under the low diffusion probability conditions. The performance gap is highest on the large weighted network \texttt{small-city}. However, in high probability regimes, the performance suffers especially against \textsc{ApproxCascade-All}. This could be explained by the fact that when the underlying cascade is small, only a small subset of nodes in a positive pool are actually infected. In this scenario, choosing infections carefully with \textsc{ApproxCascade} yields dividends. On the other hand, when the cascade is large and almost all nodes in a pool are infected, the \probPoolMLE{} solution under-selects, trying to keep costs low. The exact threshold beyond which \textsc{ApproxCascade-All} is superior is connected to the diffusion probability threshold beyond which large cascades are common. For example, on $G(n=1000, q=0.02)$, this transition is in $(0.05,0.10)$ whereas on BA $(n=1000, m=3)$, it is between $(0.10, 0.20)$.

\textit{Pool size and pool ratio. }
In Figure \ref{fig:performance}, we plot the F1-scores of all the methods versus the pool size. On most networks, \textsc{ApproxCascade} is consistently better than the baselines for all pool sizes. This shows that there is a range of pool sizes for which \textsc{ApproxCascade} yields good solutions. Secondly, increasing the pool
size leads to degradation in the performance. This
is to be expected as the pools get bigger, the number of possible subsets of infected nodes grows exponentially. This makes it likelier to obtain reconstructed nodesets that have less overlap with the ground truth.
There is little impact of the pool ratio on the relative performance of our method against the baselines. The performance improves with higher pool ratio, which is due to the fact that there are fewer unobserved infections outside the pooled set.

\textit{Estimating prevalence. } How well does \algo{} perform as a method for prevalence estimation? Figure \ref{fig:perf-prev} shows the prevalence $e_{rel}$ scores obtained by the methods with respect to the ground truth cascade size. We find that \algo{} outperforms the baseline in nearly all regimes, remaining within $[-0.5,0.5]$ in a wide range of cascade sizes. 
\algo{} tends to underestimate the cascade size which is typical for a MLE cascade method. Like in missing infection recovery, the gap is wider for low diffusion probabilities and narrower for the $G(n,q)$ network. Also, \textsc{ApproxCascade-All} massively overestimates in this task (plots moved to the appendix).

\noindent
\textbf{\algoOne{}}\\
We report the F1-scores obtained by \algoOne{} and the random baseline in  Figure \ref{fig:round_perf} on the missing infections recovery task. \algoOne{} outperforms the baseline across disease regimes and networks. The performance gap is consistent across varying seeding probabilities and edge-diffusion probabilities.

\begin{figure}
    \centering
    \begin{subfigure}{0.48\columnwidth}
        \centering
        \includegraphics[width=\linewidth]{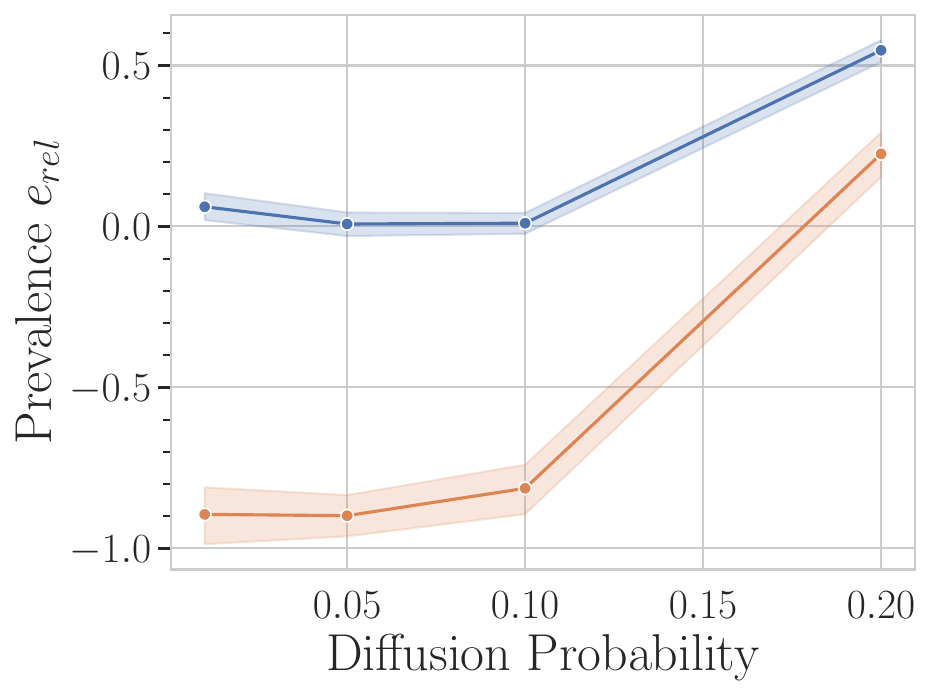}
        \caption{BA $m=3$}
    \end{subfigure}
    \begin{subfigure}{0.48\columnwidth}
        \centering
        \includegraphics[width=\linewidth]{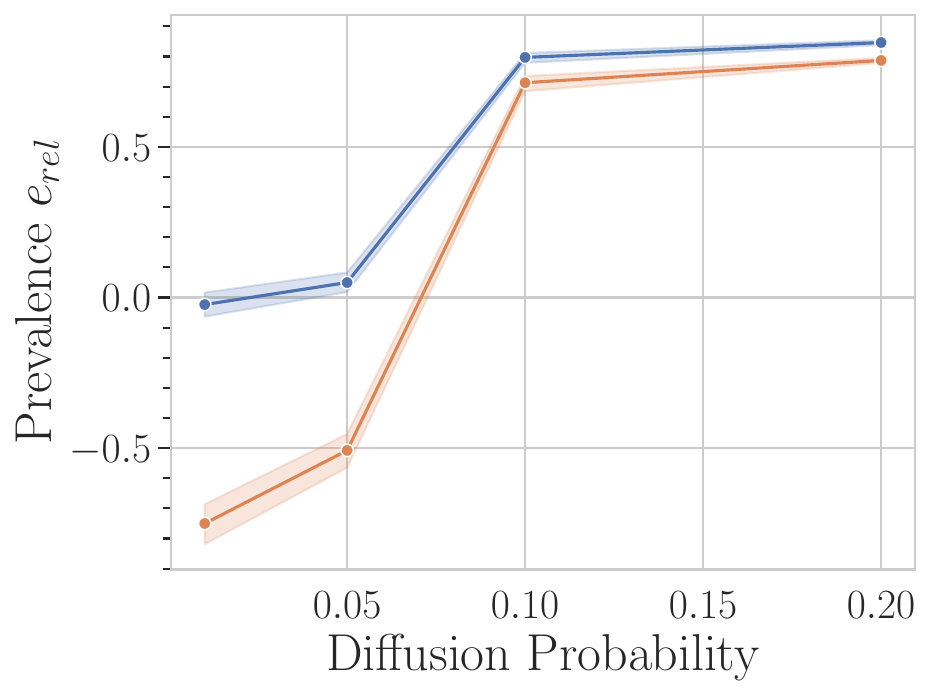}
     \caption{$G(n=1000,q=0.02)$}
    \end{subfigure}
    \begin{subfigure}{0.48\columnwidth}
        \centering
    \includegraphics[width=\linewidth]{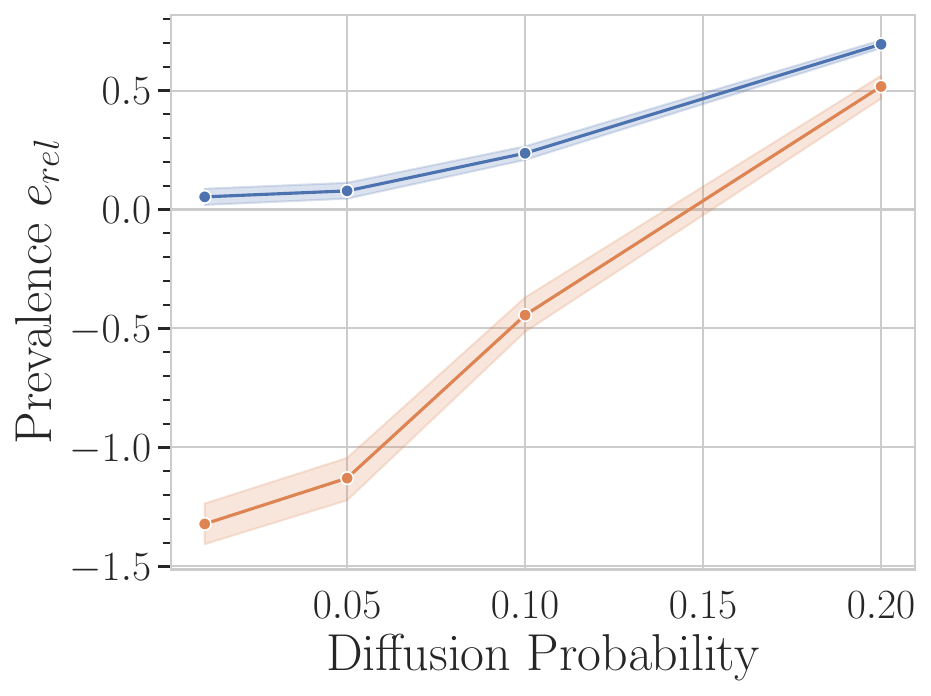}
    \caption{\texttt{hospital-icu}}
    \end{subfigure}
    \begin{subfigure}{0.48\columnwidth}
        \centering
    \includegraphics[width=\linewidth]{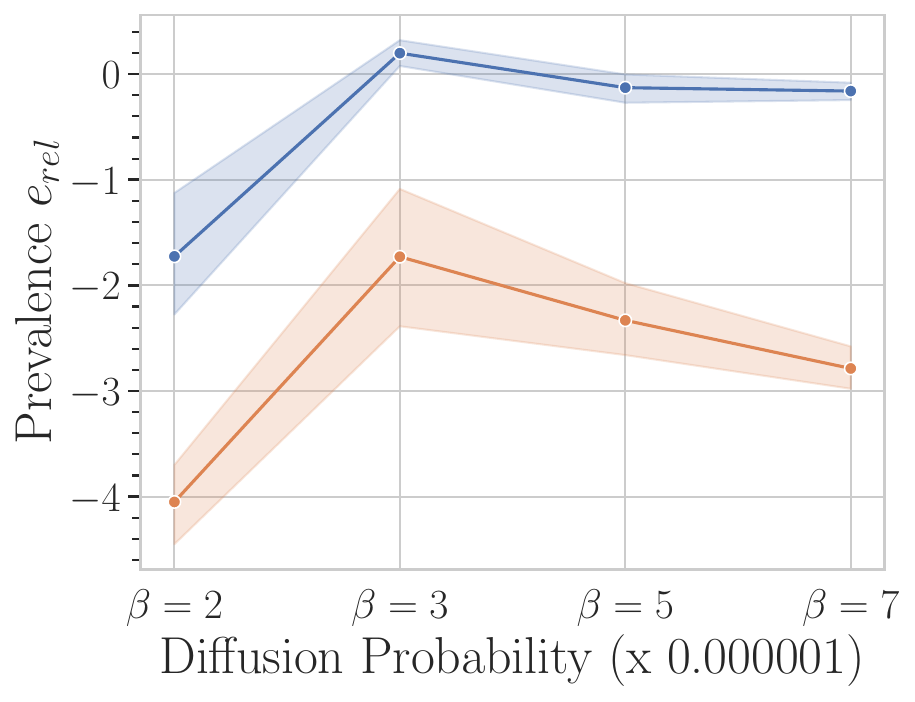}
    \caption{\texttt{small-city}}
    \end{subfigure}
    \includegraphics[width=0.95\linewidth]{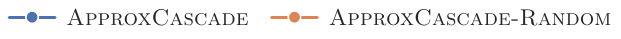}
    \caption{Performance comparison in terms of prevalence estimation relative error $e_{rel}$.}
    \label{fig:perf-prev}
\end{figure}

\begin{figure}
    \centering
    \begin{subfigure}{\columnwidth}
    \centering
    \includegraphics[width=0.48\linewidth]{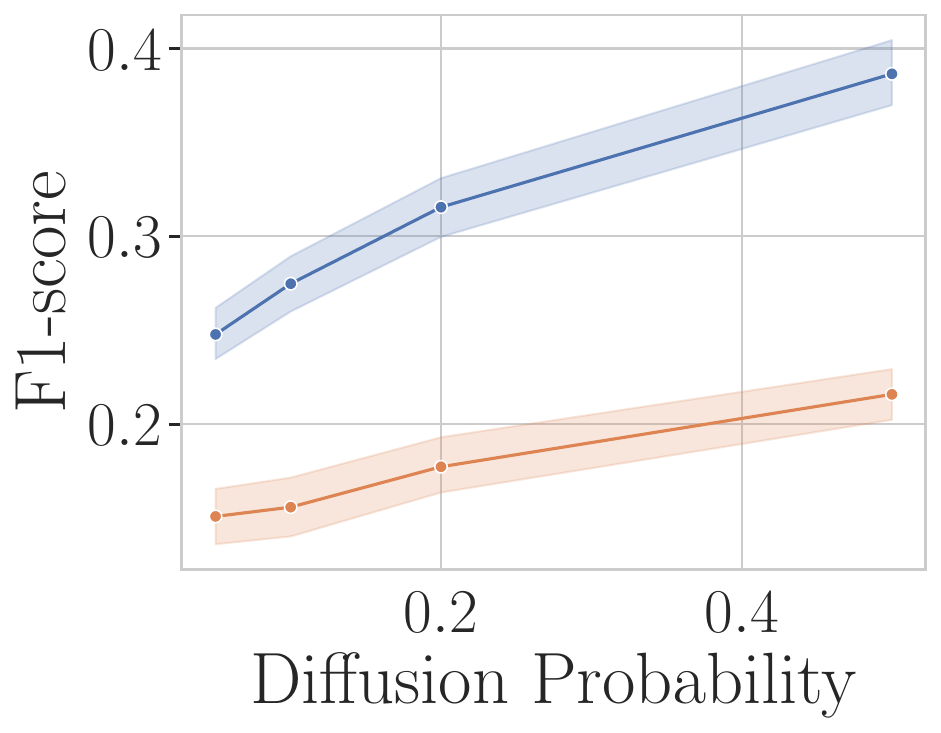}
    \includegraphics[width=0.48\linewidth]{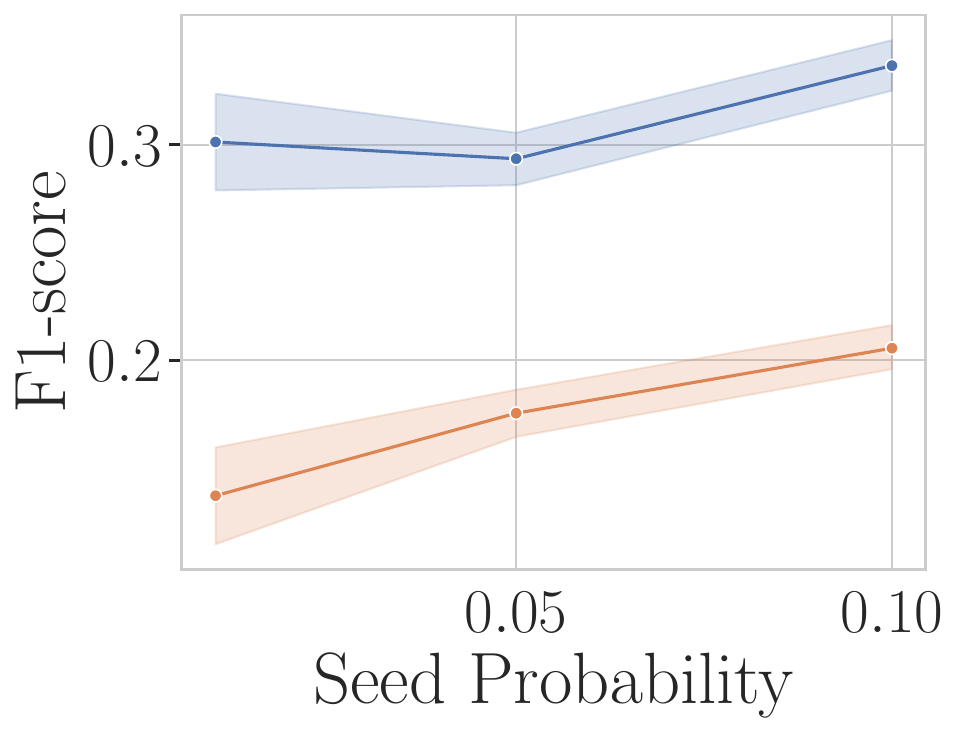}
    \caption{BA $m=3$}
    \end{subfigure}
    \begin{subfigure}{\columnwidth}
        \centering
        \includegraphics[width=0.48\linewidth]{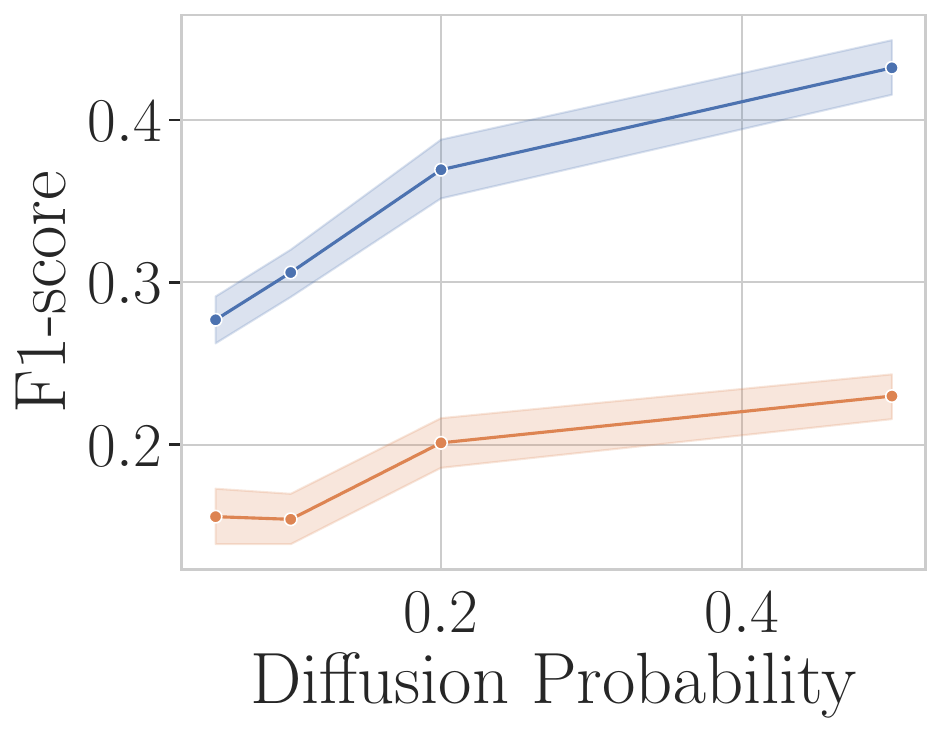}
        \includegraphics[width=0.48\linewidth]{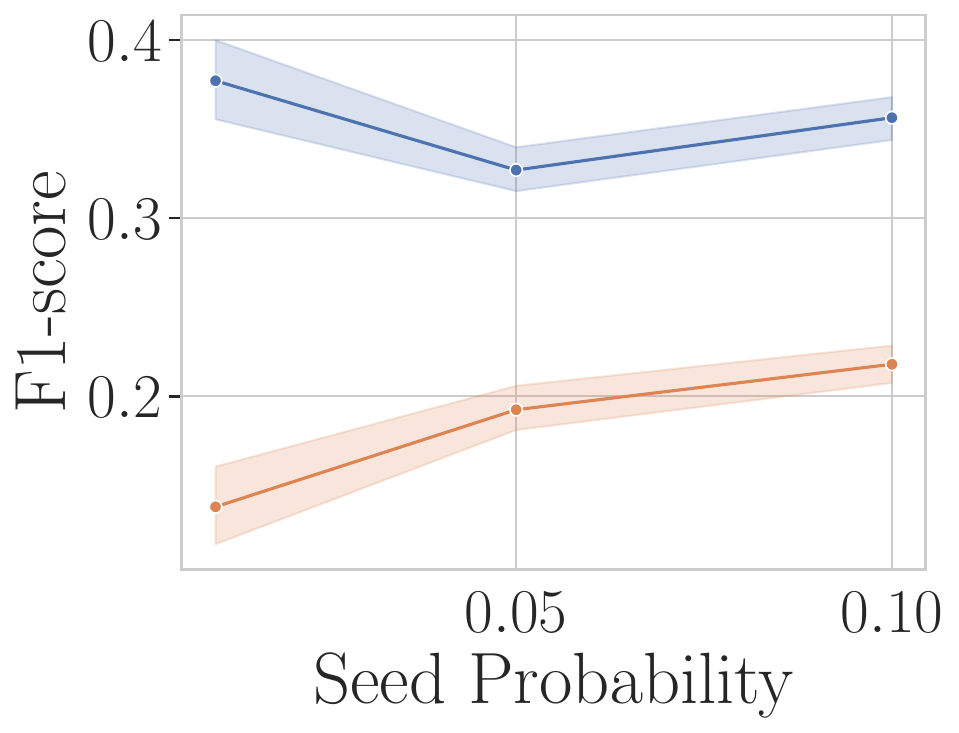}
        \caption{\texttt{hospital-icu}}
    \end{subfigure}
    \includegraphics[width=0.9\linewidth]{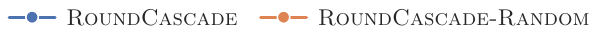}
    \caption{Performance of \textsc{RoundCascade} against the baseline.}
    \label{fig:round_perf}
\end{figure}

\section{Limitations of the MLE cascade approach}
\begin{figure}[!htb]
    \centering
    \includegraphics[width=0.55\columnwidth]{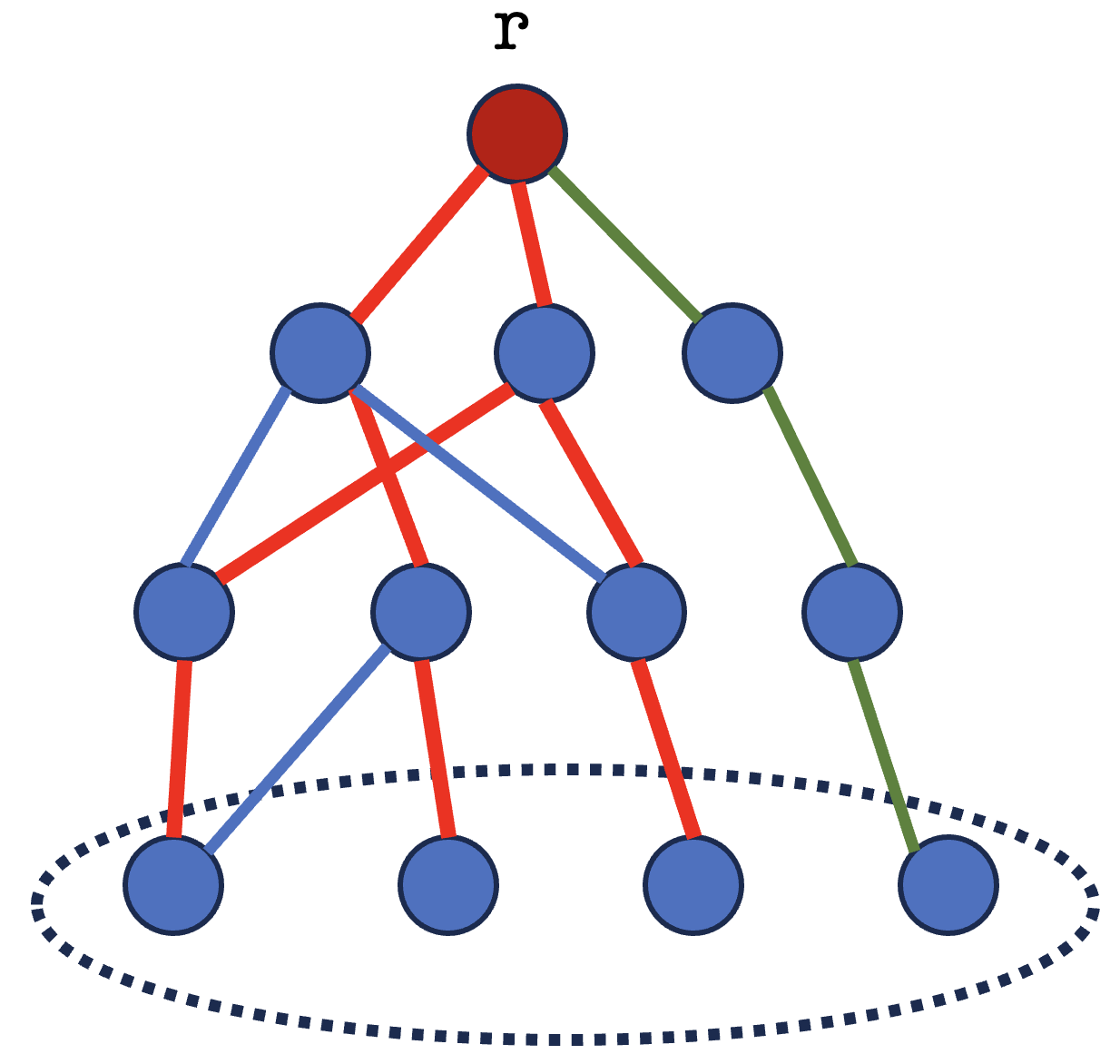}
    \caption{Ground truth cascade shown in red. \probPoolMLE{} solution shown in green. The testing pool is the dotted circle around the leaf nodes.}
    \label{fig:limit_poolMLE}
\end{figure}
\noindent

\begin{observation}
    There exist instances where the \probPoolMLE{} solution does not recover any part of the ground truth cascade.
\end{observation}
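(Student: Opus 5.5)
The plan is to give an explicit small instance, following the picture in Figure~\ref{fig:limit_poolMLE}: a seed $r$ with two disjoint paths leading to a single positive pool $g$ that contains the leaf nodes. The ground-truth cascade $T^{gt}$ uses one path, which has low transmission probabilities (high $c_e$). The \probPoolMLE{} solution $T^\star$ uses the other path, which has higher probabilities and reaches a different leaf of the same pool. Consistency only asks for \emph{some} node of $g$, so both cascades are consistent. The MLE is therefore free to take the cheaper branch, and apart from the seed $r$ (which is given and shared by every solution) it overlaps the ground truth in no node.

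Concretely, let $r$ have two neighbors $a$ and $b$. Let $a$ be adjacent to leaf $\ell_1$ and $b$ adjacent to leaf $\ell_2$, with $g=\{\ell_1,\ell_2\}\in\Gamma_1$ and $\Gamma_0=\emptyset$. Give edges $(r,a)$ and $(a,\ell_1)$ probability $p$, and edges $(r,b)$ and $(b,\ell_2)$ probability $q$, with $p<q\le 1/2$ so that Assumption~\ref{prob_assume} holds. The ground-truth realization $r\to a\to \ell_1$ is a possible IC outcome: $r$ infects $a$ but not $b$, and $a$ infects $\ell_1$. Its pool outcome is $g$ positive.

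Next I compute the costs from (\ref{eq:approxCost}) for the consistent trees. By the lemma above, the solution is a tree, and here the only candidates are:
\begin{itemize}
\item $T_1=\{(r,a),(a,\ell_1)\}$, with $\Cost(T_1)=2c_p+d_q$, where $d_q$ comes from the boundary edge $(r,b)$;
\item $T_2=\{(r,b),(b,\ell_2)\}$, with $\Cost(T_2)=2c_q+d_p$;
\item larger trees containing both branches, whose cost is at least $c_p+c_q+\min(c_p,c_q)$.
\end{itemize}
Then $\Cost(T_2)<\Cost(T_1)$ is equivalent to $2(c_q-c_p)<d_q-d_p$. Since $c_q<c_p$ and $d_q>d_p$ when $p<q$, the left side is negative and the right side is positive, so the inequality holds. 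The large trees are also dominated by $T_2$: $c_q\ge d_q$ gives $c_q+c_p\ge d_q+c_p> d_p+c_q$, using $c_p-d_p>c_q-d_q$ for $p<q<1/2$. Hence the unique MLE solution is $T_2$, and its non-seed nodes $\{b,\ell_2\}$ are disjoint from the ground-truth infected set $\{a,\ell_1\}$.

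The main obstacle is the interpretation of "any part". The seed $r$ is always shared, so I read the claim as referring to the nodes other than the given seed, equivalently the edges. Under that reading the edge sets are disjoint and the non-seed node sets are disjoint, which completes the proof. To make the point stronger, the construction can be repeated with many leaves in $g$ or with longer paths, giving arbitrarily large disjoint cascades.
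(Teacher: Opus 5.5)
Your proposal is correct and takes essentially the same approach as the paper, which argues through the example in Figure~\ref{fig:limit_poolMLE}: because a positive pool only has to be reached at \emph{some} node, the MLE takes a cheaper path from $r$ to a pool node, and that path avoids the ground-truth tree. Your two-branch path instance makes this argument explicit with a full cost comparison, including the dominated trees that use both branches, and your reading of ``any part'' as excluding the given seed $r$ matches the paper's own example.
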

\noindent
In Figure \ref{fig:limit_poolMLE}, the  testing pool, shown by the dotted circle around the infected leaf nodes, leads to a positive test. As a \probPoolMLE{} solution needs to connect the root $r$ to at least one node in the pool, while minimizing the \Cost{}, the green path is picked over the higher cost ground truth tree shown in red.
This arises due to the infected leaf nodes being pooled together whereas if they were individually tested, the MLE cascade would be closer to the ground truth.

\noindent
\textbf{Impact of noisy testing.} 
We demonstrate an example
showing that the MLE solution can be very different compared to the ground
truth in a noisy test condition. The example network $G(V,E)$ consists of
three sets of nodes-- $V=\{u\}\cup V'\cup V''$, where~$u$ is a special
\emph{central} node, $V'={v_1,\ldots,v_k}$ such that~$V'$ induces the
path $v_1v_2\cdots v_k$, and $V''=\{w_{ij}\mid i,j=1,\cdots,k\}$ such
that $uw_{i1}w_{i2}\cdots w_{ik}v_i$ is an induced path for
all~$i=1,\cdots,k$. Let each~$P_i=\{v_i,w_{ik}\}$ be a pool. There are
exactly~$k$ pools. Also, let the false positive probability~$q_{fp}=0$,
while the false negative probability~$q_{fn}>0$. Suppose that the cascade
graph is precisely the graph~$v_1v_2\cdots v_k$. Then, all pool tests will
be positive. However, the probability that at least one of the
pools in $\{P_2,P_3,\cdots,P_{k-1}\}$ is reported as negative while~$P_1$
and~$P_k$ are reported as positive
is~$(1-q_{fn})^2\big(1-(1-q_{fn})^{k-2}\big)$. If this ``bad'' event
occurs, then, any MLE solution must contain $v_1$ and $v_k$, which cannot
be connected via the path~$v_1v_2\cdots v_k$. This in turn implies that it
must involve the central node~$w$ and at least~$2k$ nodes from $V''$. Thus
the MLE solution differs from ground truth by more than 50\% of the nodes.
Note that the probability of the bad event can be made arbitrarily high by
increasing~$k$.

\begin{observation}
There exist instances for which the solution to the \probPoolMLEnoisy{} has $o(n)$ overlap with the ground truth, compared to a solution to \probPoolMLE{}.
\end{observation}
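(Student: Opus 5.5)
The plan is to make the noisy-testing example preceding the statement rigorous, and then compare two quantities on that same instance: the overlap of the \probPoolMLEnoisy{} solution with the ground truth, and the overlap of a \probPoolMLE{} solution with the ground truth.

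\textbf{The instance.} Take the graph $G$ on $\{u\}\cup V'\cup V''$ from the example, with pools $P_i=\{v_i,w_{ik}\}$, $q_{fp}=0$ and $q_{fn}>0$. Let the ground-truth cascade be the path $v_1v_2\cdots v_k$, seeded inside $V'$. Here $n=\Theta(k^2)$ and the ground truth has $k=\Theta(\sqrt n)$ nodes.

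\textbf{Step 1: \probPoolMLE{} with exact outcomes.} All $k$ pools are positive. I will show that, for a suitable uniform choice of edge probabilities (say $p_e=p\le 1/2$ small), an optimal consistent tree is essentially the path $v_1\cdots v_k$. Any alternative must hit every pool $P_i$. Reaching a $w_{ik}$ from the seed requires leaving $V'$ and traversing a long spoke of length $\Theta(k)$. The idea is to charge every spoke node used against a path node saved, via the cost expression~\eqref{eq:approxCost}, using $c_e\ge d_e$ (Assumption~\ref{prob_assume}). The conclusion is that the optimum contains $\Omega(k)$ nodes of $V'$, so its overlap with the ground truth is $\Theta(k)$.

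\textbf{Step 2: the bad event in the noisy setting.} As computed in the example, with probability $(1-q_{fn})^2\bigl(1-(1-q_{fn})^{k-2}\bigr)\to 1$, some pool $P_j$ with $1<j<k$ is observed negative, while $P_1$ and $P_k$ are observed positive. Since $q_{fp}=0$, observed positives are genuine, so every admissible $(\Gamma_0',\Gamma_1')$ keeps $P_1,P_k$ positive.

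\textbf{Step 3: the noisy MLE cost trade-off.} This is the main obstacle. The noisy MLE may either flip $P_j$ back to positive, paying $\log(1/q_{fn})$ plus a likelihood factor, or treat $P_j$ as truly negative. In the latter case $v_j$ is excluded, and $v_1$ and $v_k$ must be joined through $u$ using two spokes, that is, $2k$ nodes of $V''$. To force the second option, I would choose $q_{fn}$ tiny, so that the flip penalty dominates the $O(k)$ edge cost of the detour. One may also scale $k$ with $q_{fn}$ so that the bad event remains likely; this needs care, since the bad-event probability and the flip penalty pull in opposite directions. I would handle it by taking $q_{fn}=1/k^{c}$ and checking that both effects survive.

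\textbf{Step 4: comparing overlaps.} Once the noisy solution is forced to use the detour, it contains $v_1$, $v_k$, $u$ and $\Theta(k)$ spoke nodes, together with only $O(1)$ or $O(\sqrt{\cdot})$ nodes of $V'$. Bounding how many path nodes it can retain through the same cost-charging argument gives overlap $o(k)=o(n)$ with the ground truth. Step 1 gives overlap $\Theta(k)$ for the \probPoolMLE{} solution. Combining the two yields the claimed separation.
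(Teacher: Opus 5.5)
There is a genuine gap in Step 4, and the separation you are aiming for (noisy overlap $o(k)$ versus noiseless overlap $\Theta(k)$) does not hold on this instance.

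\textbf{Step 4.} In your bad event only one, or a few, of the middle pools is reported negative. Since $q_{fp}=0$, every other $P_i$ stays in $\Gamma_1'$, so the noisy solution must still contain $v_i$ or $w_{ik}$ for each such $i$. Reaching $w_{ik}$ without $v_i$ means walking the whole spoke $i$ down from $u$, which costs at least $k$ extra tree edges, each with $c_e\ge\log 2$. The detour solution ($V'\setminus\{v_j\}$ plus two spokes through $u$) costs only $O(k)$. So the optimum can afford only $O(1)$ such spokes, and it contains all but $O(1)$ of the $v_i$. Its overlap with the ground truth is therefore $k-O(1)$, the same order as the noiseless overlap. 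The cost-charging you invoke forces path nodes \emph{into} the solution, not out of it.

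The literal ``$o(n)$ overlap'' is also vacuous here. We have $n=\Theta(k^2)$ while the ground truth has only $k=\Theta(\sqrt n)$ nodes, so every solution, noisy or not, has $o(n)$ overlap; writing $o(k)=o(n)$ mixes two scales.

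The paper proves something weaker. The noisy MLE must contain $u$ and at least $2k$ nodes of $V''$, none of them infected, against at most $k-1$ ground-truth nodes. Hence more than half its nodes are false positives (``differs from ground truth by more than 50\% of the nodes''), while the noiseless MLE is the path itself. Once you drop the $o(k)$ claim, your Steps 1--3 support exactly this. If you want genuinely small overlap, you need a different observed outcome. For example, with seed $v_1$, report all of $P_2,\dots,P_{k-1}$ negative and take $q_{fn}$ a sufficiently small constant relative to $p$. The noisy optimum is then $v_1$, spoke $1$, $u$, spoke $k$, which meets the ground truth in $O(1)$ nodes.

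\textbf{Step 3.} Your parameter choice there fails. Keeping $P_j$ negative costs $\Omega(k)$ more than the path: the detour has at least $2k-2$ more tree edges, each costing $c_e\ge\log 2$, while the path's exclusion terms total only $k d_e\le k c_e$. With $q_{fn}=k^{-c}$, the flip penalty is only about $c\log k$, so the noisy optimum reports $P_j$ positive and returns the path.

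Making the detour optimal requires $q_{fn}\le e^{-\Omega(k)}$. Then the bad event has probability about $kq_{fn}$, which is exponentially small. So ``bad event is likely'' and ``detour is optimal'' cannot both hold on this instance. You are right that the flip must be weighed; the paper's sketch simply treats the reported negative as genuine. For an existence statement, the fix is to fix the observed outcome and take $q_{fn}$ exponentially small, not to tune $q_{fn}=k^{-c}$.
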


The above construction implies that the MLE solution under noise is significantly different from that without noise.
It can be shown that for the above instance, carefully chosen pool tests involving the nodes on the path can ensure that the path is inferred as being infected.
This implies that overlapping tests are more powerful than non-overlapping tests for the epidemic reconstruction problem. This is a significant contrast with the observation in~\cite{finster2023welfaremaximizing} that non-overlapping pools are approximately optimal in the welfare maximization problem.

\section{Conclusion}
We introduce the \probPoolMLE{} problem for reconstructing an epidemic outbreak under group surveillance.
Despite a lot of work in group surveillance, including for infectious diseases, wastewater monitoring, the problem of reconstructing an outbreak hasn't been studied before.
\probPoolMLE{} is NP-hard to even approximate, and we design an approximation algorithm using Group Steiner tree techniques. We consider \probOneHop{}, which is a special case for an IC process that can spread for a maximum of one-hop from the unknown seeds. We find that even this problem is NP-hard to approximate and use randomized rounding on an LP relaxation to achieve a logarithmic approximation bound.
Experiments on synthetic and real contact networks, including a hospital contact network show that our methods which systematically connect infections in a pool considerably outperform baselines which do not do so.
We also show that noise has a very significant impact on the MLE solution.
While our work assumes that the pools are given,
designing optimal pools in the context MLE algorithms for better
inference of the outbreak is an interesting future direction that can
help improve surveillance strategies. 




\begin{acks}
This research is partially supported by NSF grants CCF-1918656 and CNS-2317193, and DTRA award HDTRA1-24-R-0028, Cooperative Agreement number 6NU50CK000555-03-01 from the Centers for Disease Control and Prevention (CDC) and DCLS, Network Models of Food Systems and their Application to Invasive Species Spread, grant no. 2019-67021-29933 from the USDA National Institute of Food and Agriculture, Agricultural AI for Transforming Workforce and Decision Support (AgAID) grant no. 2021-67021-35344 from the USDA National Institute of Food and Agriculture.
\end{acks}



\bibliographystyle{ACM-Reference-Format} 
\bibliography{references}

\appendix
\onecolumn
\begin{center}
\fbox{{\Large\textbf{Technical Supplement: Reconstructing Network Outbreaks under Group Surveillance}}}
\end{center}

\section{Additional details for Section 4}

\subsection{Proof of Theorem~\ref{thm:poolmlehardness}}
This follows from Lemma \ref{lem:pool2gst} and the hardness of the group Steiner tree problem~\cite{halperin2003polylogarithmic}.


\subsection{Proof of Theorem~\ref{thm:onehophardness}}
 We use a reduction from the Minimum Set Cover (MSC) problem which is defined as follows. We are given
     \begin{itemize}
         \item The set of all elements $U=\{e_1, e_2, \dots e_n\}$.
         \item The collection of subsets $S=\{S_1,S_2, \dots, S_m\}, S_i \subseteq U$.
     \end{itemize}
     Given an instance of the MSC problem, we will construct an instance of \probOneHop{}.
     \begin{enumerate}
         \item Construct a bipartite graph $(V_0,V_1,E)$ where we add a node to $V_1$ for each element in $U$ and a node to $V_0$ for each set in $S$. We add an edge $(i,j), i\in V_0, j\in V_1$ if and only if $e_j\in S_i$.
         \item For each node in $V_0$, the homogenous seeding probability $p^0$ is set such that seeding cost $a \gg b$, the cost of not seeding. For each node in $V_1$, seeding probability is zero.
         \item For each edge, the homogenous transmission probability is set such that $c \gg d$.
       
         \item We set the observations $\Gamma_0=\emptyset$, and $\Gamma_1 = V_1$.
   
     \end{enumerate}
     This construction can be done in $O(|U|\times |S|)$ time. Any consistent cascade must cover all nodes in $V_1$. The choice is in picking nodes in $V_0$. 
     Let us suppose we are given a solution to the constructed instance of \probOneHop{}, cascade $c_A$ which includes a subset $A\in V_0$ and~$V_1$ as nodes. $c_A$ includes exactly $|V_1|$ edges, one edge per node in $V_1$ due to $c>d$. Let $\delta_A$ be the cutset of edges from $A$ to $V_1$.
     \begin{align*}
         \Cost^1(c_A) = a|A| + b(|V_0|-|A|) + c|V_1| + d(|\delta_A| - |V_1|)
     \end{align*}

     Consider an optimal solution for the \probOneHop{}, $c^*$ with $A^*\subseteq V_0$ as the seed set. We will show that~$A^*$ corresponds to a minimum set cover for the MSC instance. Firstly, since $V_1$ is the
     observation set, the sets in the MSC instance corresponding to~$A^*$ constitute a set cover.
     Suppose there exists a set~$A'\subset V_0$ such that~$|A'|<|A^*|$ and~$c_{A'}$ is a consistent cascade. Then, $\Cost^1(c^*)-\Cost^1(c_{A'})=(a-b)(|A^*|-|A'|) + d(|\delta_{A^*}|-|\delta_{A'}|)$.
Setting $a-b>d|V_0||V_1|$, we note that~$\Cost^1(c_{A'})<\Cost^1(c^*)$, a 
contradiction to the fact that $c^*$ is an optimal cascade. Hence, $A^*$ corresponds to a minimum set cover.

To prove approximation hardness, we note that given any minimum set cover
$A$, the cost of the corresponding cascade $c_{A}$ is
\[
a|A|+c|V_1|+b(|V_0|-|A|)\le\Cost^1(c_A)\le a|A|+c|V_1|+b(|V_0|-|A|) + d|V_0||V_1|\,.
\]
By setting $a-b\gg d|V_0||V_1|$, we can achieve $\Cost^1(c^*)\le\Cost^1(c_A)\le\Cost^1(c^*)(1+\epsilon)$ for
some small constant~$\epsilon$ where~$c^*$ is an optimal cascade. This ensures
that the $\Omega(\log n)$ approximation hardness  of the MSC problem translates to the \probOneHop{} problem as well.

\section{Additional details for Section 5.1}
\textbf{Reduction in Lemma \ref{lem:nwgst2ewdst}. }
Denote an edge from $u$ to $v$ with weight $w(u,v)$ as a 3-tuple $(u,v,w(u,v))$, a node $v$ with weight $w(v)$ as $(v, w(v))$. Define utility functions AddEdge$(G, (u, v, w(u,v)))$ as adding edge $(u,v,w(u,v))$ to graph $G$, and AddNode$(G, (v,w(v)))$ as adding node $(v,w(v))$ to graph $G$. The reduction is described in Algorithm~\ref{alg:GST}.
\begin{algorithm}
    \caption{(\citet{charikar1999})~GroupSteinerTree$(G, r, \Gamma_1)$}
    \label{alg:GST}
    \raggedright{
    \textbf{Input: } Undirected graph $G=(V,E)$ with node- and edge-weights, seed $r$, Group Terminals $\Gamma_1$ \\
    \textbf{Output:} Group Steiner tree $T_r (\Gamma_1)$
    }
    \begin{algorithmic}[1]
    \State Initialize empty digraph $G'$
    
    \For {each $(u, w(u) \in V$}
        \State AddNode$(G', (u_{in}, w(u))$
        \State AddNode$(G', (u_{out}, w(u))$
        \State AddEdge$(G', (u_{in}, u_{out}, w(u)))$
    \EndFor
    \For{each edge $(u,v) \in E$}
        \State AddEdge$(G', (u_{out},v_{in}, w(u,v) ))$
        \State AddEdge$(G', (v_{out}, u_{in}, w(u,v)))$
    \EndFor
    \For{each group $g \in \Gamma_1$}
        \State AddNode$(G', d_g)$
        \For{each node $v_{out}\in g$}
            \State AddEdge$(G', (v_{out}, d_g, 0)$
        \EndFor
    \EndFor
    \State Let $S = \{d_g\}_{g\in \Gamma_1}$
    \State Solve $T_r =$ \textsc{DirectedSteinerTree}$(G, r, S)$
    \State \textbf{return} $T_r$
    \end{algorithmic}

\end{algorithm}

\textbf{Theorem \ref{theorem:approx_bound}. }
    Let $\hat{T}_r$, rooted at $r$, be the tree returned by the Algorithm \ref{alg:approx_cascade}.
    Let $T^*_{r^*}$ be an optimal solution to \probPoolMLE{}, rooted at $r^*$. Then,
    \begin{align*}
        \Cost(\hat{T}_r) \leq O(k^\epsilon)\Cost(T^*_{r^*}),
    \end{align*}
    where $k$ is the number of positive pools, i.e., $|\Gamma_1|$ and $\epsilon>0$.

\begin{proof}
    Algorithm \ref{alg:approx_cascade} finds the approximate Group Steiner tree $\hat{T}_v$ rooted at $v$ for every node $v \in V(G')$ and returns a tree $\hat{T}_r$ which minimizes the \Cost{}. Thus,
    \begin{align}
        \Cost(\hat{T}_r) \leq \Cost(\hat{T}_v) \label{eq:hatBest}
    \end{align}
    for any node $v$.
    
    From Lemma \ref{lem:pool2gst}, we have,
    \begin{align}
        & \Cost(\hat{T}_r) \leq w(\hat{T}_r) \leq 2\ \Cost(\hat{T}_r)\ \text{and} \label{eq:hatCost} \\
        & \Cost(T^*_{r^*}) \leq w(T^*_{r^*}) \leq 2~\Cost(T^*_{r^*}) \label{eq:starCost}
    \end{align}

Let $T^{gst}_{r^*}$ be an optimal Group Steiner tree on $G'$, rooted at $r^*$. Due to Lemma \ref{lem:nwgst2ewdst}, and the $O(k^\epsilon)$-approximation bound from \cite{charikar1999},
\begin{align}
\label{eq:hatGst}
    w(\hat{T}_{r^*}) \leq O(k^\epsilon) w(T^{gst}_{r^*})
\end{align}
Since $T^*_{r^*}$ is also a group Steiner tree on $G'$, rooted at $r^*$, its weight is lower bounded by that of $T^{gst}_{r^*}$.
\begin{align}
    \label{eq:gstTstar}
    w(T^{gst}_{r^*}) \leq w(T^*_{r^*})
\end{align}

From (\ref{eq:hatBest}),(\ref{eq:hatCost}),(\ref{eq:starCost}),(\ref{eq:hatGst}) and (\ref{eq:gstTstar}),
\begin{align*}
    & \Cost(\hat{T}_r) \leq \Cost(\hat{T}_{r^*}) \leq w(\hat{T}_{r^*}) \leq O(k^\epsilon)w(T^{gst}_{r^*})  \\
    & \quad \leq O(k^\epsilon) w(T^*_{r^*}) \leq 2O(k^\epsilon) \Cost(T^*_{r^*}) \\
    & \quad \Rightarrow \Cost(\hat{T}_r) \leq O(k^\epsilon) \Cost(T^*_{r^*})
\end{align*}
\end{proof}

\subsection{Extension to the {\probPoolMLEnoisy{}} problem}

Given observed pooled test outcomes $(\Gamma_0, \Gamma_1)$, we would like to find a test outcome $(\Gamma_0', \Gamma_1')$ and a tree consistent with the test outcome which maximizes the following probability,
\begin{align*}
    P(\Gamma_0',\Gamma_1'|\Gamma_0, \Gamma_1)P(T_r| \Gamma_0',\Gamma_1')
\end{align*}
Let $q(\Gamma_0,\Gamma_1,\Gamma_0',\Gamma_1') $ denote the probability that the actual outcome is $\Gamma_0',\Gamma_1'$ given that the observed outcomes are $\Gamma_0,\Gamma_1$. This is computed by product of probabilities of false positive and false negative. For instance, if the observed test outcomes for 4 pools were ``1100" and the actual test outcomes were ``1001",
then $q = (1-p_{fp})p_{fp}(1-p_{fn})p_{fn}$.
In terms of costs, we would like to minimize,
\begin{align*}
    \Cost_{noisy}(\Gamma_0',\Gamma_1') = 
    \Cost(T_r) + \log{(1/q(\Gamma_0,\Gamma_1,\Gamma_0',\Gamma_1'))}
\end{align*}

\begin{algorithm}
\caption{\textsc{ApproxCascade-Noisy}}
\raggedright{
\textbf{Input:} A contact network $G=(V,E)$, pool-tested node groups $\Gamma_0$, $\Gamma_1$.\\
\textbf{Output:} Test outcomes $(\Gamma_0', \Gamma_1')$ and a consistent tree $T'$.
}
\label{alg:noisy_approx_cascade}
\begin{algorithmic}[1]
    \State Let $\mathcal{S}$ be the set of all possible test outcomes for the pools in $\Gamma_0 \cup \Gamma_1$.
    \For{$(\overline{\Gamma_0},\overline{\Gamma_1})$ in $\mathcal{S}$}
        \State Compute $q(\Gamma_0,\Gamma_1,\overline{\Gamma_0},\overline{\Gamma_1})$.
        \State $\overline{T_r}$ =  \textsc{ApproxCascade}(G, $(\overline{\Gamma_0},\overline{\Gamma_1})$)
        \State Compute the \probPoolMLEnoisy{} cost.
        \begin{align*}
            \Cost_{noisy}(\overline{\Gamma_0},\overline{\Gamma_1}) = \Cost(\overline{T_r}) + \log{(1/q(\Gamma_0, \Gamma_1,\overline{\Gamma_0},\overline{\Gamma_1}))}
        \end{align*}
    \EndFor
    \State Choose $\Gamma_0',\Gamma_1',T_r'$ from among $S$ which minimizes the $\Cost_{noisy}$.
    \State \textbf{return} $\Gamma_0',\Gamma_1',T_r'$.
    
\end{algorithmic}

\end{algorithm}

Algorithm $\ref{alg:noisy_approx_cascade}$ calls \textsc{ApproxCascade} $2^{|\Gamma|}$ times, which can be sped up by pruning.

\section{Additional details for Section 5.2}

\textbf{Lemma 4: }
Let  $x, y, z$ denote the optimal integral solutions to the above program.
Then, $\sum_{i\in V_0} ( a_ix_i + b_i(1-x_i)) + \sum_{(i,j)\in E} ( c_{ij}y_{ij} + d_{ij}(z_{ij}) )\leq 2OPT$, where $OPT$ denotes the cost of the optimal solution to the instance of \probOneHop{}.

\begin{proof}
For each $ij$, we define $z_{ij}$ as:
(1) if $y_{ij}=1$ then $z_{ij}=0$
(2) if $x_i=0$ then $z_{ij}=0$
(3) if $x_i=1, y_{ij}=0$ then $z_{ij}=1$.
Then $x, y, z$ corresponds to the optimal solution

Given $x, y, z$, construct $x', y', z'$ in the following manner:
$x'=x, y'=y$.
For each $ij$, set $z'_{ij} = x_i$.
We argue that $\sum_{ij} d_{ij} z'_{ij}\leq \sum_{ij} c_{ij} y_{ij} + d_{ij}z_{ij}$, and the statement follows.
Consider any $i,j$.
Suppose $x'_i=0$. Then $z'_{ij}=0$.
Suppose $x'_i=1$. 
Then $z'_{ij}=1$.
If $y_{ij}=0$ then we must have $z_{ij}=1$, and $z'_{ij}\leq z_{ij}$.
If $y_{ij}=1$, we have $d_{ij}z'_{ij}\leq c_{ij}y_{ij}$
\end{proof}

\section{Additional Details for Section~\ref{sec:exp}}

\paragraph{Computation time.} The experiments were conducted on a HPC
cluster where each node consists of two Intel Xeon Gold 6148 CPUs with
20 cores each. The RAM per node is 384GB. The algorithms were
implemented in Python~3.8. The results of computation time are shown
in Figure~\ref{fig:timing}. One of the most significant
factors is the number of positive pools, which determines the 
number of terminals that are input to the Steiner tree algorithm.
An interesting aspect is that the time taken is also significantly
affected by the pool size.

\begin{figure}[!htb]
    \centering
    \includegraphics[width=.45\textwidth]{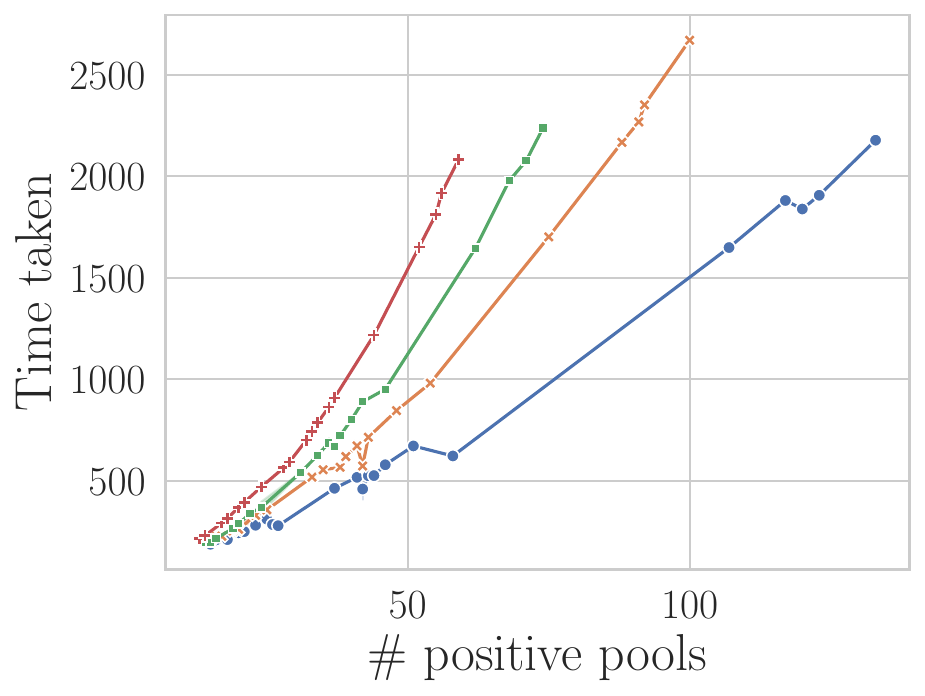}
    \includegraphics[width=0.45\textwidth]{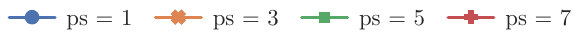}
    \caption{Time taken (in sec) by \textsc{ApproxCascade} on~\texttt{hospital-icu} against number of positive pools. Each line is for a pool size (ps). We fix $p=0.1$ and pooled ratio = 0.5. 
\label{fig:timing}
}
\end{figure}

\begin{figure}
    \centering
    \begin{subfigure}{\columnwidth}
        \centering
            \includegraphics[width=0.3\linewidth]{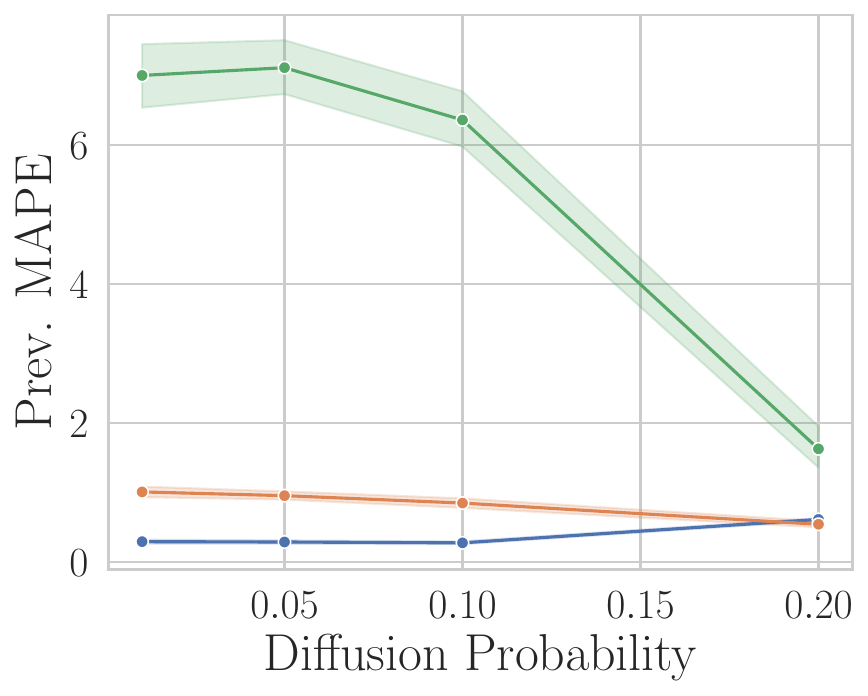}
            \includegraphics[width=0.3\linewidth]{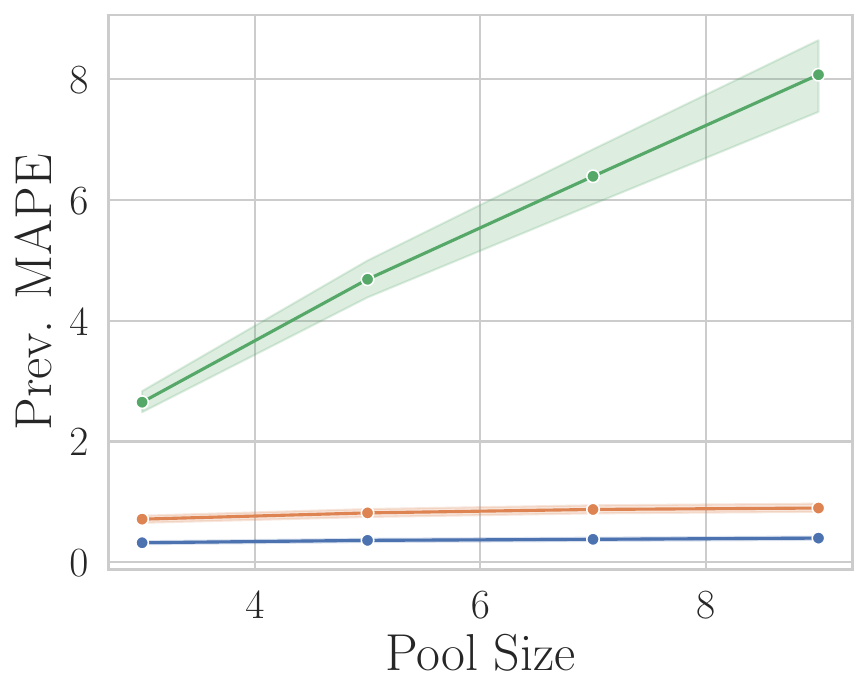}
        \caption{BA $(m=3)$}
    \end{subfigure}
    \begin{subfigure}{\columnwidth}
        \centering
            \includegraphics[width=0.3\linewidth]{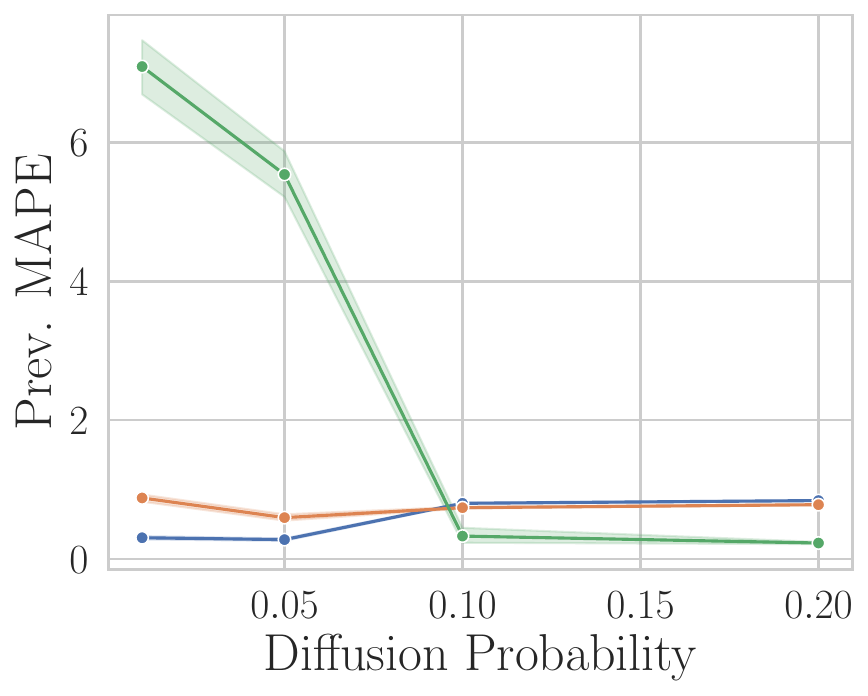}
            \includegraphics[width=0.3\linewidth]{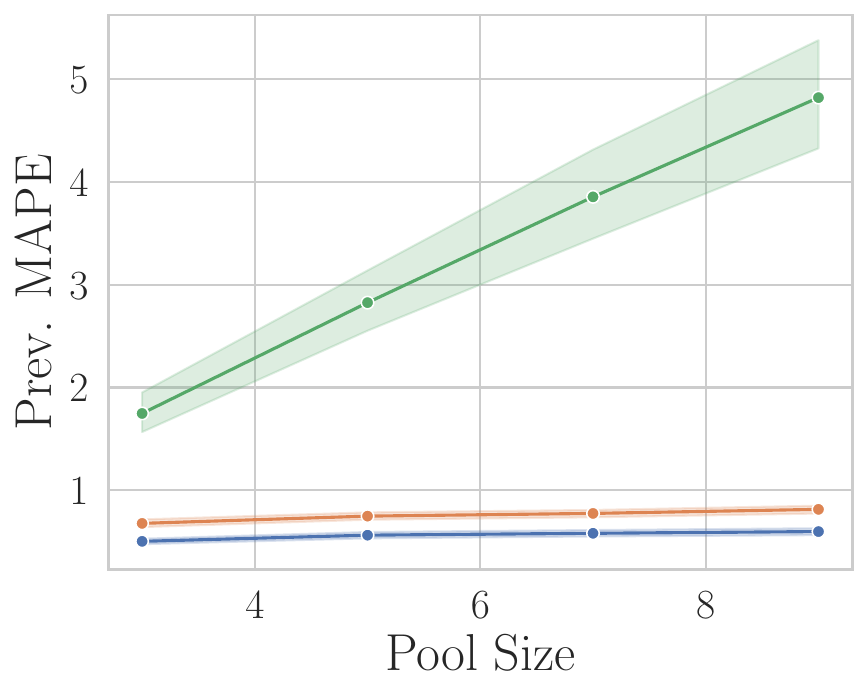}
        \caption{$G(n=1000, q=0.02)$}
    \end{subfigure}
    \begin{subfigure}{\columnwidth}
        \centering
            \includegraphics[width=0.3\linewidth]{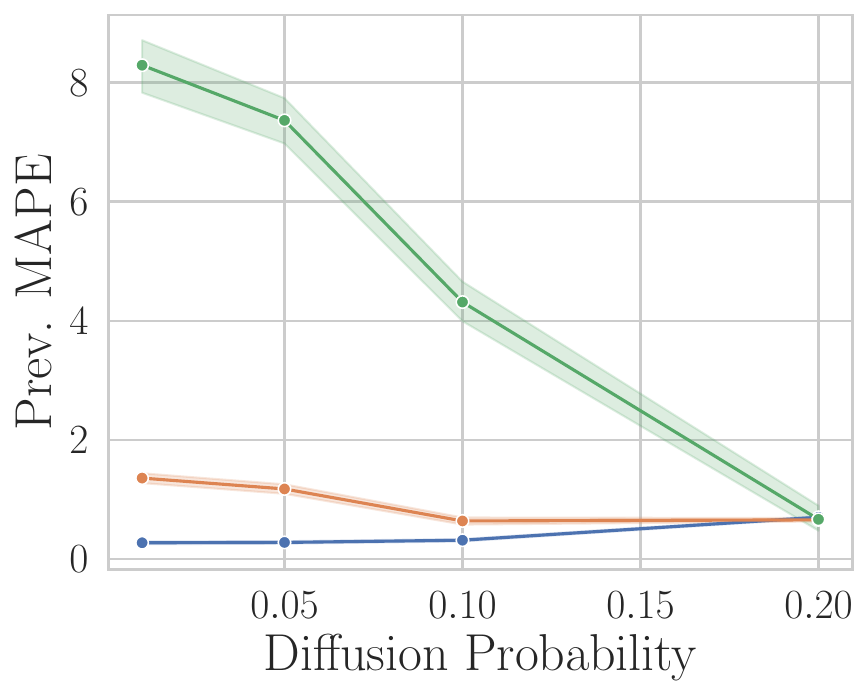}
            \includegraphics[width=0.3\linewidth]{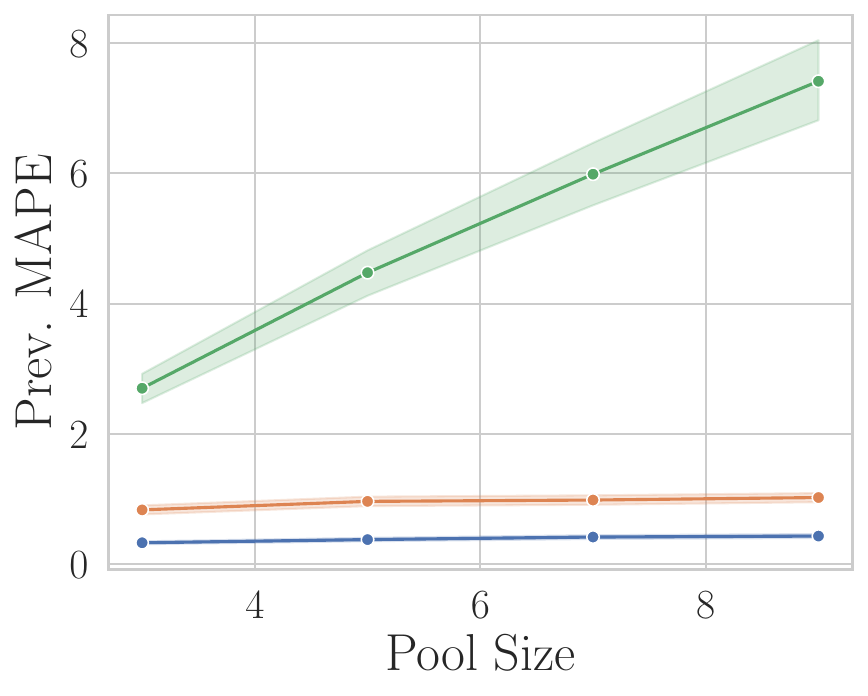}
        \caption{\texttt{hospital-icu}}
    \end{subfigure}
    \begin{subfigure}{\columnwidth}
        \centering
            \includegraphics[width=0.3\linewidth]{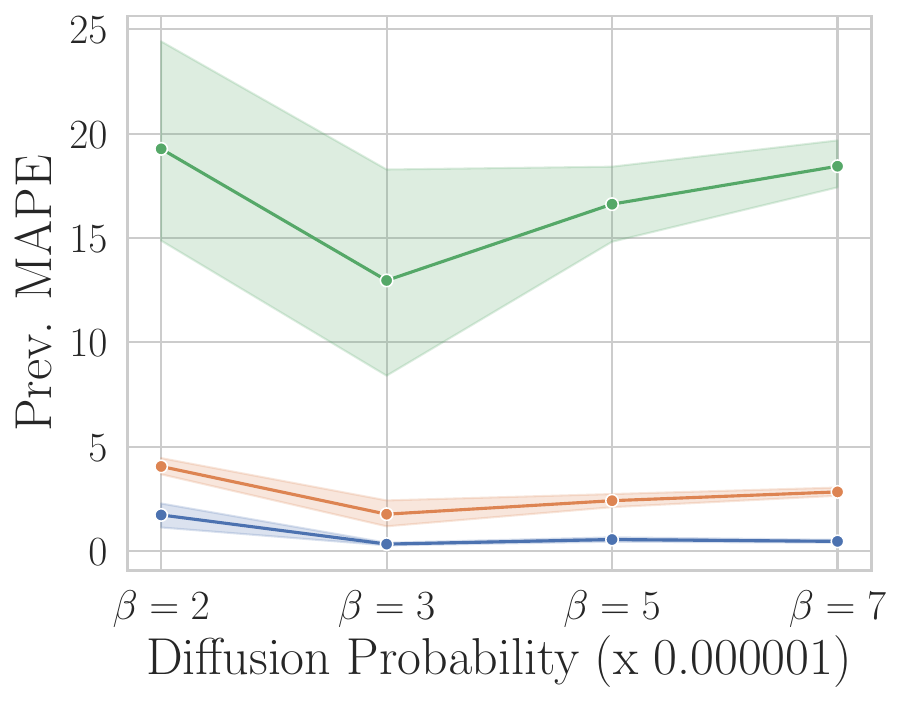}
            \includegraphics[width=0.3\linewidth]{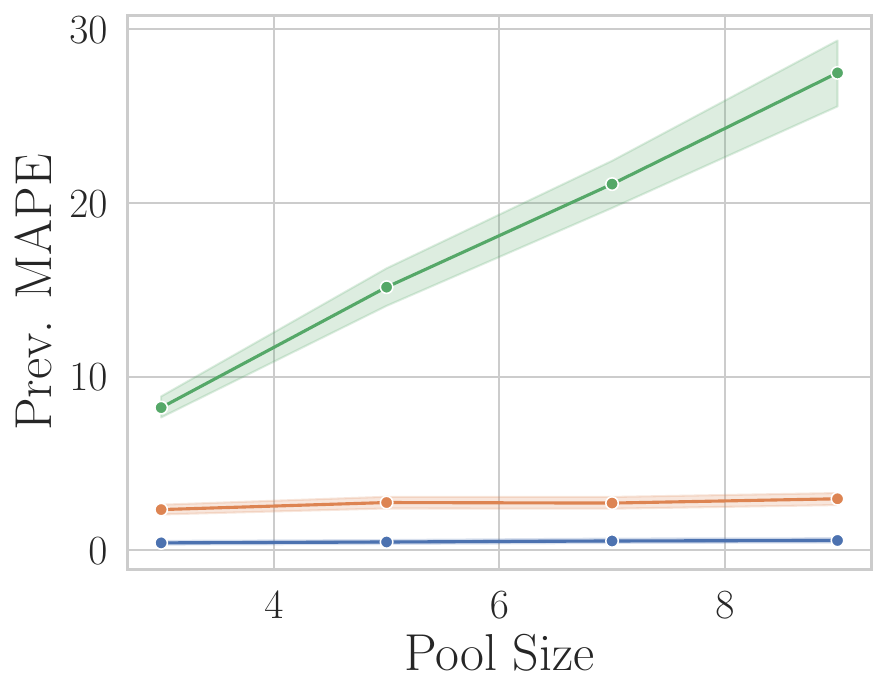}
        \caption{\texttt{small-city}}
    \end{subfigure}

    \caption{Performance of \algo{} against baselines in terms of prevalence estimation error}
    \label{fig:prev_addn}
\end{figure}


\end{document}